\definecolor{darkblue}{rgb}{0,0,0.5}
\definecolor{darkgreen}{rgb}{0,0.5,0}
\newcommand{\N}{\mathbb{N}}
\newcommand{\R}{\mathbb{R}}
\newcommand{\C}{\mathbb{C}}
\newcommand{\mc}[1]{\mathcal{#1}}
\newcommand{\xor}{\oplus}
\newcommand{\tensor}{\otimes}
\newcommand{\Qb}{\mathcal{Q}}		
\DeclareMathOperator{\Bool}{Bool}	
\newcommand{\Id}{\mathbbm{1}}		
\newcommand{\HS}{\mc{H}}			
\newcommand{\negl}{\mathrm{negl}}	
\newcommand{\E}{\mathop{{}\mathbb{E}}}
\newtheoremstyle{Definition_Line_Break}
	{\topsep}				
	{\topsep} 				
	{}						
	{}						
	{\bfseries}				
	{.}						
	{\newline}				
	{}						
\theoremstyle{Definition_Line_Break}
\newtheorem{definition}{Definition}
\newtheoremstyle{Theorem_Line_Break}
	{\topsep}				
	{\topsep} 				
	{}						
	{}						
	{\bfseries}				
	{.}						
	{\newline}				
	{}						
\theoremstyle{Theorem_Line_Break}
\newtheorem{theorem}[definition]{Theorem}
\newtheorem{lemma}[definition]{Lemma}
\newtheorem{corollary}[definition]{Corollary}
\renewenvironment{proof}[1][\relax]{\par
  \pushQED{\qed}%
  \normalfont \topsep6\p@\@plus6\p@\relax
  \trivlist
  \item[\hskip\labelsep\itshape
    \ifx#1\relax \proofname\else\proofname{} (#1)\fi\@addpunct{.}]\ignorespaces
}{%
  \popQED\endtrivlist\@endpefalse
}
\title{\textbf{Uncloneable Quantum Encryption via Oracles}}
\author{
Anne Broadbent\\
\small University of Ottawa\\
\small \texttt{abroadbe@uottawa.ca}
\and
S\'ebastien Lord\\
\small University of Ottawa\\
\small \texttt{slord050@uottawa.ca}
}
\date{}
\begin{document} 

\maketitle

\begin{abstract}
Quantum information is well-known to achieve cryptographic feats that are unattainable using classical information alone.
Here, we add to this repertoire by introducing a new cryptographic functionality called \emph{uncloneable encryption}.
This functionality allows the encryption of a classical message such that two collaborating but isolated adversaries are prevented from  simultaneously recovering the message, even when the encryption key is revealed.
Clearly, such functionality is unattainable using classical information alone.

We formally define uncloneable encryption, and show how to achieve it using Wiesner's conjugate coding, combined with a quantum-secure pseudorandom function (qPRF).
Modelling the qPRF as an oracle, we show security by adapting techniques from the quantum one-way-to-hiding lemma, as well as using bounds from quantum monogamy-of-entanglement games.
\end{abstract}


\section{Introduction} %

One of the key distinctions between classical and quantum information is given by the \emph{no-cloning principle}: unlike bits, arbitrary qubits cannot be perfectly copied \cite{Par70,WZ82,Die82}.
This principle is the basis of many of the feats of quantum cryptography, including quantum money \cite{Wie83} and quantum key distribution (QKD) \cite{BB84} (for a survey on quantum cryptography, see \cite{BS16}).

In QKD, two parties establish a shared secret key, using public quantum communication combined with an authentic classical channel.
The quantum communication allows to \emph{detect} eavesdropping: when the parties detect only a small amount of eavesdropping, they can produce a shared string that is essentially guaranteed to be private. 
Gottesman~\cite{Got03} studied \emph{quantum tamper-detection} in the case of \emph{encryption schemes}: in this work, a classical message is encrypted into a quantum ciphertext such that, at decryption time, the receiver will \emph{detect} if an adversary could have information about the plaintext when the key is revealed.
We note that classical information alone cannot produce such encryption schemes, since it is always possible to perfectly \emph{copy} ciphertexts.

Notably, Gottesman left open the question of an encryption scheme that would \emph{prevent} the \emph{splitting} of a ciphertext.
In other words, would it be possible to encrypt a classical message into a quantum ciphertext, such that no attack at the ciphertext level would be significantly successful  in producing \emph{two} quantum registers, each of which, when combined with the decryption key, could be used to reconstruct the plaintext?

In this work, we define, construct and prove security for a scheme that answers Gottesman's question in the positive. We call this \emph{uncloneable encryption}.
The core technical aspects of this work were first presented in one of the author's M.Sc.\ thesis~\cite{Lor19}.

\subsection{Summary of Contributions} %
\label{sec:SummaryContributions}      %

We consider encryption schemes that encode classical plaintexts into quantum ciphertexts, which we formalize in  \cref{definition:QECM}.
For simplicity, in this work, we consider only the one-time, symmetric-key case.
Next, we define  uncloneable encryption (\cref{definition:US}).
Informally, this can be thought of as a game, played between the honest sender (Alice) and two malicious recipients (Bob and Charlie).
First, Alice picks a message $m \in \{0,1\}^n$ and a key $k\in \{0,1\}^{\kappa(\lambda)}$  ($\kappa$ is a polynomial in some security parameter,~$\lambda$).
She encrypts her message into a quantum ciphertext register $R$.
Initially, Bob and Charlie are physically together, and they receive~$R$. They apply a quantum map to produce two registers:
Bob keeps register~$B$ and Charlie keeps register~$C$.
Bob and Charlie are then isolated.
In the next phase, Alice reveals~$k$ to both parties.
Using $k$ and their quantum register, Bob and Charlie produce $m_B$ and $m_C$ respectively.
Bob and Charlie \emph{win} if and only if $m_B = m_C=m$.
The scheme is \emph{$t$-uncloneable secure} if their winning probability is upper bounded by $2^{-n+t} + \eta(\lambda)$ for a negligible $\eta$.

Assuming that Alice picks her message uniformly at random, our results are summarized in \cref{figure:plot}, where we plot upper bounds for the winning probability of Bob and Charlie against various types of encodings, according to the length of~$m$.
First of all, if the encoding is classical, then Bob and Charlie can each keep  a copy of the ciphertext.
Combined with the key $k$, each party decrypts to obtain~$m$.
This gives the horizontal line at~$\Pr[\text{Adversaries win}] =1$.
Next,  a lower bound on the winning probability for \emph{any} encryption scheme is $\frac{1}{2^n}$ (corresponding to the parties coordinating a random guess). This is the \emph{ideal} curve.
Our goal is therefore to produce an encryption scheme that matches the ideal curve as close as possible.

It may seem that asking that Alice sample her message uniformly at random would be particularly restrictive, but this is not the case --- we show in \cref{theorem:unif->t} that security in the case of uniformly sampled messages implies security in the case of non-uniformly sampled messages, if the message size does not grow with the security parameter.
Specifically, if Bob and Charlie can win with probability at most $2^{-n + t} + \eta(\lambda)$ when the message is sampled uniformly at random, for some $t$ and some negligible function $\eta$, then they can win with probability at most $2^{-h+t} + \eta'(\lambda)$ if the message $m$ is sampled from a distribution with a min-entropy of $h$ where $\eta'$ is a negligible function which is larger than $\eta$.

Our first attempt at realizing uncloneable encryption (\cref{section:conjugate-encryption}) shows that the well-known Wiesner conjugate coding \cite{Wie83} already achieves a security bound that is better than any classical scheme.
For any strings $x, \theta \in \{0,1\}^n$, define the Wiesner state
$
	\ket{x^\theta}
	=
	H^{\theta_1} \ket{x_1}
	\tensor \ldots \tensor
	H^{\theta_n} \ket{x_n}
$.
The encryption uses a random key $r, \theta \in \{0,1\}^n$ and maps a classical message $m$ into the quantum state $\rho = \ketbra{(m \xor r)^\theta}$; given $r, \theta$, decryption consists in measuring in the basis determined by~$\theta$ to obtain~$x$ and then computing $x\oplus r$.
We sketch a proof that this satisfies a notion of security for encryption schemes.
The question of uncloneability then boils down to: ``How well can an adversary \emph{split} $\rho$ into \emph{two} registers, each of which, combined with $(\theta, r)$ can reconstruct~$m$?''
This question is answered in prior work on \emph{monogamy-of-entanglement games} \cite{TFKW13}: an optimal strategy wins with probability $\left(\frac{1}{2} + \frac{1}{2\sqrt{2}}\right)^n$. This is again illustrated in \cref{figure:plot}.

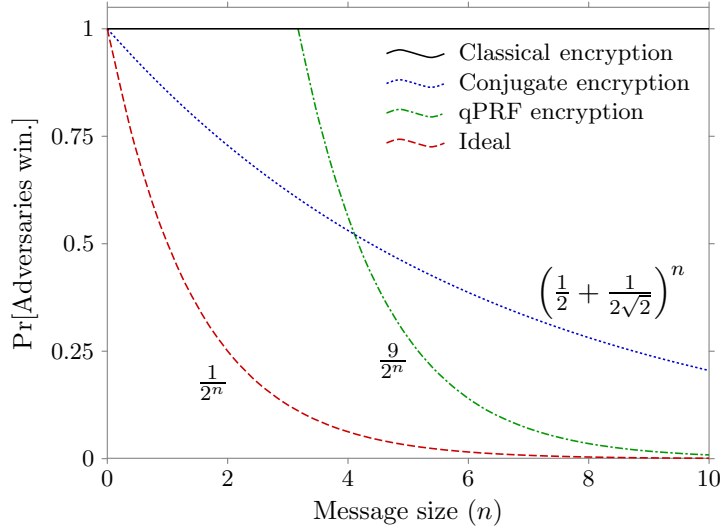
\begin{figure}
\begin{center}
\begin{tikzpicture}
	\datavisualization [
						scientific axes,
						y axis={length=6cm, ticks=few, include value=1.05, label={Pr[Adversaries win.]}},
						x axis={length=8cm, label={Message size ($n$)}},
						visualize as smooth line/.list={classical, ideal, conjugate-encryption, oracle},
						legend=north east inside,
						style sheet=strong colors,
						style sheet=vary dashing,
						classical={label in legend={text=Classical encryption}},
						conjugate-encryption={label in legend={text=Conjugate encryption}},
						oracle={label in legend={text=qPRF encryption}},
						ideal={label in legend={text=Ideal}},
						data/format=function
					   ]
		data [set=classical] {
			var x : interval [0:10];
			func y = 1;
		}
		data [set=conjugate-encryption] {
			var x : interval [0:10];
			func y = ((1/2) + 8^(-1/2))^(\value x);
		}
		data [set=oracle] {
			var x : interval [3.16993:10];
			func y = 9*2^(-\value x);
		}
		data [set=ideal] {
			var x : interval [0:10];
            func y = 2^(-\value x);
		};

	\node at (1.4,1) (eq:classical) {$\frac{1}{2^n}$};
	\node at (6.7,2.2) (eq:conjugate) {$\left(\frac{1}{2} + \frac{1}{2\sqrt{2}}\right)^n$};
	\node at (3.8, 1.3) (eq:oracle) {$\frac{9}{2^n}$};
\end{tikzpicture}
\end{center}
\caption{
	\label{figure:plot}
	Upper-bounds on winning probabilities for various types of encodings (up to negligible functions of $\lambda$) for messages sampled uniformly at random.
}
\end{figure}

In order to improve this bound, we use a quantum-secure pseudorandom function (qPRF) $f_\lambda : \{0,1\}^\lambda \times \{0,1\}^\lambda \to \{0,1\}^{n}$ (see \cref{def:qPRF}).
The encryption (see \cref{section:our-protocol}) consists of a quantum state  $\rho = \ketbra{r^\theta}$ for random $r,\theta \in \{0,1\}^\lambda$, together with a classical string  $c = m \xor f_\lambda(s, r)$ for a random~$s$.
The key $k$ consists in $\theta$ and $s$.
Once again, it can be shown that this is an encryption scheme in a more usual sense and we sketch this argument in \cref{section:our-protocol}.
Intuitively, the use of $f_\lambda$ affords us a gain in uncloneable security, because an adversary who wants to output~$m$ would need to know the pre-image of $m$ under $f_\lambda(s, \cdot)$.
Reaching a formal proof along these lines, however, is tricky.
First, we model the qPRF using a quantum oracle \cite{BBC+01,BDF+11}; this limits the adversaries' interaction with the qPRF to be black-box quantum queries.
Next, the quantum oracle model is notoriously tricky to use and many of the techniques in the classical literature are not directly applicable.
Fortunately, we can adapt techniques from Unruh's quantum one-way-to-hiding lemma~\cite{Unr15b} to the two-player setting, which enables us to recover a precise statement along the lines of the intuition above.
We thus complete the proof of our main \cref{thm:main-security-F-encryption}, obtaining the bound $9\cdot \frac{1}{2^n} + \negl(\lambda)$.
This is the fourth and final curve in \cref{figure:plot}.

In addition to the above, we formally define a different type of uncloneable security: inspired by more standard security definitions of \emph{indistinguishability}, we define \emph{uncloneable-indistinguishability} (\cref{definition:UNC-IND-security}).
This security definition bounds the advantage that the adversaries have at \emph{simultaneously} distinguishing between an encryption of $0^n$ and an encryption of a plaintext of length~$n$, as prepared by the adversaries.
In a series of results (\cref{thm:main-security-F-encryption-NE,theorem:0-UNC=>UNC-IND,thm:main-security-F-encryption-NE-IND}), we show that our main protocol achieves this security notion against adversaries that use \emph{unentangled strategies} and if the message size does not grow with $\lambda$.
As discussed in \cref{sec:applications}, there are interesting uses cases where we can assume that the adversaries do not share entanglement.

We note that our protocols (both \cref{protocol:CE} and \cref{protocol:FCE})  have the desirable property of being  \emph{prepare-and-measure} schemes.
This means that the quantum technology for the honest users is limited to the preparation of single-qubit pure states, as well as to single-qubit measurements; these quantum technologies are mature and commercially available.
(Note, however, that quantum storage remains a major challenge at the implementation level).

\subsection{Applications} %
\label{sec:applications}  %

While our focus is on the conceptual contribution of defining and proving a new primitive, we believe that uncloneable encryption could have many applications.
We give two such examples.

\paragraph{\textbf{Quantum Money.}} %

As it captures the idea of ``uncloneable classical information'' in a very generic manner, uncloneable encryption can be used as a tool to build other primitives which leverage the uncloneability of quantum states.
As an example, any uncloneable secure encryption scheme naturally yields a private-key quantum money scheme \cite{Wie83,AC12}.

To obtain quantum money from an uncloneable encryption scheme, we identify the notion of ``simultaneously passing the bank's verification'' with the notion of ``simultaneously obtaining the correct plaintext''. 
To generate a banknote, the bank samples a message $m$, a key $k$, a serial number $s$ and produces as output $(s, Enc(k,m))$, where $Enc(k,m)$ is the uncloneable encryption of $m$ with the key $k$.
When the bank is asked to verify a banknote, it verifies the serial number in its database to retrieve $k$, decrypts the ciphertext and verifies if the message obtained is indeed $m$.

The uncloneable security guarantee implies that the probability of a malicious party producing two banknotes which pass this test is negligible.
If this were not the case, we could use the attack which counterfeits the banknote to essentially copy the ciphertext in the underlying uncloneable encryption scheme.
The adversaries tasked with obtaining the message once the key is revealed then simply decrypt as if they were the honest receivers. 

\paragraph{\textbf{Preventing Storage Attacks by Classical Adversaries.}} %

Indistinguishable-uncloneable encryption prevents a single eavesdropping adversary with no quantum memory from collecting ciphertexts exchanged by two honest parties in the hope of later learning the key.
We sketch an argument for this fact.

Suppose such an adversary obtains a ciphertext encoded with an uncloneable-indistinguishable encryption scheme.
We claim that they cannot correctly determine if the ciphertext corresponds to the encryption of $0^n$ or of some known message $m$ with non-negligible advantage, even if the decryption key becomes known after their measurement of the ciphertext.
If such an adversary existed, it could be used to break the uncloneable-indistinguishable security of the encryption scheme.
Indeed, the almost classical eavesdropper could create two copies of their classical memory and distribute it to the two adversaries who attempt to obtain the message once the key is revealed.\footnote{We thank an anonymous reviewer for this suggestion.}

Note that the adversaries in this attack do not share any entanglement and so we can apply \cref{thm:main-security-F-encryption-NE-IND} which states that our encryption scheme is uncloneable-indistinguishable secure under this condition.

Our work is currently in the private-key setting, but can be extended in a straightforward way to the public-key setting.
In this scenario, we can still guarantee the secrecy of the message even if the eavesdropper is later able to determine the decryption key from the publicly known encryption key.
In other words, an eavesdropping adversary with no quantum memory would need to attack the ciphertext at the moment of transmission.
This is known as \emph{everlasting} security or \emph{long-term} security.

\subsection{More on Related Work} %
\label{sec:related-work}          %

Starting with the foundational work of Wiesner \cite{Wie83}, a rich body of literature has considered the encoding of classical information into quantum states in order to take advantage of quantum properties for cryptography.

\paragraph{\textbf{Quantum Key Recycling.}} 
The concept of quantum key recycling is a precursor to the QKD protocol, developed by Bennett, Brassard, and Breidbart~\cite{BBB14} (the manuscript was prepared in 1982 but only published recently).
According to this protocol, it is possible to encrypt a classical message into a quantum state, such that information-theoretic security is assured, but in addition, a tamper detection mechanism would allow the one-time pad key to be re-used in the case that no eavesdropping is detected.
Quantum key recycling has been the object of recent related work~\cite{DPS05,FS17}. 

\paragraph{\textbf{Tamper-Evident Encryption.}}
We referred above to tamper-detection in the case of encryption, which we will also call \emph{tamper-evident encryption}. However, we emphasize that the author originally called this contribution \emph{uncloneable encryption} \cite{Got03}.
We justify this choice of re-labelling in quoting the conclusion of the work:
\begin{quote}
One difficulty with such generalizations is that it is unclear to what extent the name ``uncloneable encryption'' is really deserved.
I have not shown that a message protected by uncloneable encryption cannot be copied --- only that Eve cannot copy it without being detected. Is it possible for Eve to create two states, (...), which can each be used (in conjunction with the secret key) to extract a good deal of information about the message?
Or can one instead prove bounds, for instance, on the sum of the information content of the various purported copies?
\cite{Got03}
\end{quote}
Since our work addresses this question, we have appropriately re-labeled prior work according to a seemingly more accurate name.

To the best of our knowledge, the precise relationship between quantum key-recycling, tamper-evident encryption, and  uncloneable encryption is unknown (see \cref{sec:future-work}).

\paragraph{\textbf{Quantum Copy-Protection.}}
Further related work includes the study of \emph{quantum copy-protection}, as initiated by Aaronson \cite{Aar09}.
Informally, this is a means to encode a function (from a given family of functions) into a quantum program state, such that an honest party can evaluate the function given the program state, but it would be impossible to somehow \emph{split} the quantum program state so as to enable \emph{two} parties to simultaneously evaluate the function.
Aaronson gave protocols for quantum copy-protection in an oracle model, but left wide open the question of quantum copy-protection in the plain model.
In a way, uncloneable encryption is a first step towards quantum copy-protection, since it prevents copying of \emph{data}, which can be seen as a unit of information that is even simpler than a function.

\subsection{Outlook and Future Work} %
\label{sec:future-work}              %

In this work, we show that, thanks to quantum information, one of the basic tacit assumptions of encryption, namely that an adversary can copy ciphertexts, is challenged.
We believe that this has the potential to significantly change the landscape of cryptography, for instance in  terms of techniques for \emph{key management}~\cite{NIST-SP-800-57-P1}.
Furthermore, our techniques could become building blocks for a theory of uncloneable cryptography.

Our work leads to many follow-up questions, broadly classified according to the following themes: 

\paragraph{\textbf{Improvements.}}
There are many possible improvements to the current work.
For instance: Could our scheme be made resilient to errors?
Can we remove the reliance on the oracle, and/or on the qPRF?
Could an encryption scheme simultaneously be uncloneable \emph{and} provide \emph{tamper detection}?
Would achieving uncloneable-indistinguishable security be possible, without any restrictions on the adversary's strategy?

\paragraph{\textbf{Links with related work.}}
What are the links, if any, between uncloneable encryption, tamper-evident encryption~\cite{Got03}, and quantum encryption with key recycling~\cite{BBB14,DPS05,FS17}?
We note that both uncloneable encryption and quantum encryption with key recycling~\cite{FS17} make use of theorems developed in the context of one-sided device-independent QKD~\cite{TFKW13}.
Can we make more formal links between these primitives?

\paragraph{\textbf{More uncloneability.}}
Finally, our work paves the way for the study of more complex unclonable primitives.
Could this lead to uncloneable programs~\cite{Aar09}?
What about in complexity theory, could we define and realize uncloneable \emph{proofs}~\cite{Aar09}?

\subsection{Outline} %

The remainder of the paper is structured as follows.
In \cref{sec:prelims}, we introduce some basic notation and useful results from the literature.
In \cref{sec:uncloneable encryption}, we formally define uncloneable encryption schemes and their security.
Our two protocols are described and proved secure in~\cref{sec:protocols}.

\section{Preliminaries} %
\label{sec:prelims}     %

In this section, we present basic notation, together with  techniques from prior work that are used in the remainder of the paper.

\subsection{Notation and Basics of Quantum Information} %

We denote the set of all functions of the form $f : \{0,1\}^n \to \{0,1\}^m$ by $\Bool(n,m)$.
We denote the set of strictly positive natural numbers by $\N^+$.
All Hilbert spaces are finite dimensional.
We overload the expectation symbol $\E$ in the following way:
If $X$ is a finite set, $\mc{X}$ a random variable on $X$, and $f : X \to \R$ some function, we define
\begin{equation}
	\E_{x \gets \mc{X}} f(x) = \sum_{x \in X} \Pr\left[x = \mc{X}\right] f(x).
\end{equation}
If we omit the random variable, we assume a uniform distribution.
In other words, $\E_x f(x) = \frac{1}{\abs{X}} \sum_{x \in X} f(x)$.

A comprehensive introduction to quantum information and quantum computing may be found in \cite{NC00,Wat18}.
We fix some notation in the following paragraphs.

Let $\Qb = \C^2$ be the state space of a single qubit.
In particular, $\Qb$ is a two-dimensional complex Hilbert space spanned by the orthonormal set $\{\ket{0}, \ket{1}\}$.
For any $n \in \N^+$, we write $\Qb(n) = \Qb^{\tensor n}$ and note that
\begin{equation}
\{\ket{s} = \ket{s_1} \tensor \ket{s_2} \tensor \ldots \tensor \ket{s_n}\}_{s \in \{0,1\}^n}
\end{equation}
forms an orthonormal basis of $\Qb(n)$.

Let $\HS$ be a Hilbert space.
The set of all unitary and density operators on $\HS$ are denoted by $\mc{U}(\HS)$ and $\mc{D}(\HS)$, respectively.
We recall that the operator norm of any linear operator $A : \mc{H} \to \mc{H'}$ between finite dimensional Hilbert spaces is given by
\begin{equation}
	\norm{A} = \max_{\substack{v \in \mc{H} \\ \norm{v} = 1}} \norm{A v}
\end{equation}
and satisfies the property that $\norm{Av} \leq \norm{A} \cdot \norm{v}$.
If $A$ is either a projector or a unitary operator, then $\norm{A} = 1$.

We use the term ``quantum state'' to refer to both unit vectors $\ket{\psi} \in \mc{H}$
and to density operators $\rho \in \mc{D}(\mc{H})$ on some Hilbert space.

If $H \in \mc{U}(\Qb)$ is the Hadamard operator defined by
\begin{equation}
	\ket{0} \mapsto \frac{\ket{0} + \ket{1}}{\sqrt{2}}
	\text{ and }
	\ket{1} \mapsto \frac{\ket{0} - \ket{1}}{\sqrt{2}}
\end{equation}
then, for any strings $x, \theta \in \{0,1\}^n$, we define
\begin{equation}
	\ket{x^\theta}
	=
	H^{\theta_1} \ket{x_1}
	\tensor
	H^{\theta_2} \ket{x_2}
	\tensor \ldots \tensor
	H^{\theta_n} \ket{x_n}
\end{equation}
and note that $\left\{\ket{s^\theta}\right\}_{s \in \{0,1\}^n}$ forms an orthonormal basis of $\Qb(n)$.
Following their prominent use in \cite{Wie83}, we call states of the form $\ket{x^\theta}$ Wiesner states and, for any fixed $\theta \in \{0,1\}^n$, we call $\left\{\ket{s^\theta}\right\}_{s \in \{0,1\}^n}$ a Wiesner basis.

For any $n \in \N^+$, we define the Einstein-Podolski-Rosen \cite{EPR35} (EPR) state by
\begin{equation}
	\ket{\text{EPR}_n} = \frac{1}{\sqrt{2^n}} \sum_{x \in \{0,1\}^n} \ket{x}\ket{x}
\end{equation}
and note that it is an element of $\mc{Q}(2n)$.

A positive operator-valued measurement (POVM) on a Hilbert space $\mc{H}$ is a finite collection of positive semidefinite operators $\{E_i\}_{i \in I}$ on $\mc{H}$ which sum to the identity.
A projective measurement is a POVM composed of projectors.

We also recall that physically permissible transformation of a quantum system precisely coincide with the set of completely positive trace preserving (CPTP) maps.
In particular, CPTP map will map density operators to density operators.

A polynomial-time uniform family of circuits $\texttt{C} = \{\texttt{C}_\lambda\}_{\lambda \in \N^+}$ is a collection of quantum circuits indexed by $\N^+$ such that there exists a polynomial-time deterministic Turing machine $\texttt{T}$ which, on input  $1^\lambda$, produces a description of $\texttt{C}_\lambda$.
We refer to such families as efficient circuits.
Each circuit $\texttt{C}_\lambda$ defines and implements a certain CPTP map $C_\lambda : \mc{D}(\HS_{\text{In}, \lambda}) \to \mc{D}(\HS_{\text{Out}, \lambda})$, where the Hilbert spaces $\HS_{\text{In}, \lambda}$ and $\HS_{\text{Out}, \lambda}$ are implicitly defined by the circuit.
Note that we consider general, which is to say possibly non-unitary, circuits.
These were introduced in \cite{AKN98}.
It is worth noting that a universal gate set for general quantum circuits exists which is composed of only unitary gates, implementing maps of the form $\rho \mapsto U \rho U^\dag$ for some unitary operator $U$, and two non-unitary maps which are
\begin{itemize}
	\item
		the single qubit partial trace map $\Tr : \mc{D}(\Qb) \to \mc{D}(\C)$ and
	\item
		the state preparation map $\text{Aux} : \mc{D}(\C) \to \mc{D}(\Qb)$ defined by $1 \mapsto \ketbra{0}$.
\end{itemize}
Further information on this circuit model can be found in \cite{Wat09}.

\subsection{Monogamy of Entanglement Games} %

Monogamy-of-entanglement games were introduced and studied in \cite{TFKW13}.
In short, a monogamy-of-entanglement game is played by Alice against cooperating Bob and Charlie.
Alice describes to Bob and Charlie a collection of different POVMs which she could use to measure a quantum state on a Hilbert space $\mc{H}_A$.
These POVMs are indexed by a finite set $\Theta$ and each reports a measurement result taken from a finite set $X$.
Bob and Charlie then produce a tripartite quantum state $\rho \in \mc{D}(\mc{H}_A \tensor \mc{H}_B \tensor \mc{H}_C)$, giving the $A$ register to Alice, the $B$ register to Bob and the $C$ register to Charlie.
Alice then picks a $\theta \in \Theta$, measures her subsystem with the corresponding POVM and obtains some result $x \in X$.
She then announces~$\theta$ to Bob and Charlie who are now isolated.
Bob and Charlie win if and only if they can both simultaneously guess the result $x$.

Upper bounds on the winning probability of Bob and Charlie in such games was the primary subject of study in \cite{TFKW13}.
One of their main results, corresponding to a game where Alice measures in a random Wiesner basis, is as follows.

\begin{theorem}
\label{theorem:TFKW13}
Let $\lambda \in \N^+$.
For any Hilbert spaces $\HS_B$ and $\HS_C$, any collections of POVMs
\begin{equation}
	\left\{
		\left\{B^\theta_x\right\}_{x \in \{0,1\}^\lambda}
	\right\}_{\theta \in \{0,1\}^n}
	\text{ and }
	\left\{
		\left\{C^\theta_x\right\}_{x \in \{0,1\}^\lambda}
	\right\}_{\theta \in \{0,1\}^n}
\end{equation}
on these Hilbert spaces, and any state $\rho \in \mc{D}(\Qb(\lambda) \tensor \HS_B \tensor \HS_C)$, we have that
\begin{equation}
	\E_\theta
	\sum_{x \in \{0,1\}^\lambda}
	\Tr\left[
		\left(\ketbra{x^\theta} \tensor B_x^\theta \tensor C_x^\theta\right)
		\rho
	\right]
	\leq
	\left(\frac{1}{2} + \frac{1}{2\sqrt{2}}\right)^\lambda.
\end{equation}
\end{theorem}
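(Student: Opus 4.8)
The plan is to follow the operator-norm strategy for monogamy-of-entanglement games. First I would reduce to the case where each $\left\{B^\theta_x\right\}_x$ and $\left\{C^\theta_x\right\}_x$ is a projective measurement: a Naimark dilation replaces each POVM by a projective measurement on a larger space, and since Bob and Charlie are free to adjoin ancillas to $\rho$, this reduction cannot decrease the quantity being bounded, so it suffices to prove the inequality in the projective case. With projectors in hand, for each $\theta$ I define
\[
	\Pi^\theta = \sum_{x \in \{0,1\}^\lambda} \ketbra{x^\theta}{x^\theta} \tensor B^\theta_x \tensor C^\theta_x .
\]
Because $\left\{\ketbra{x^\theta}{x^\theta}\right\}_x$, $\left\{B^\theta_x\right\}_x$ and $\left\{C^\theta_x\right\}_x$ are each families of mutually orthogonal projectors, the cross terms in $(\Pi^\theta)^2$ vanish and one checks $(\Pi^\theta)^2 = \Pi^\theta$, so each $\Pi^\theta$ is itself a projector. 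The left-hand side of the theorem is exactly $\Tr[(\E_\theta \Pi^\theta)\rho]$, and since $\E_\theta \Pi^\theta$ is positive semidefinite while $\rho$ is a density operator, this is at most $\norm{\E_\theta \Pi^\theta} = \frac{1}{2^\lambda}\norm{\sum_\theta \Pi^\theta}$. The whole problem thus reduces to bounding the operator norm of a sum of projectors.

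Second, I would prove the key norm lemma: for projectors $\left\{\Pi^\theta\right\}_{\theta \in \{0,1\}^\lambda}$ and any permutations $\pi_1, \dots, \pi_{2^\lambda}$ of $\{0,1\}^\lambda$ that are mutually orthogonal (meaning $\pi_i(\theta) \neq \pi_j(\theta)$ whenever $i \neq j$),
\[
	\norm{\sum_\theta \Pi^\theta}
	\leq
	\sum_{i=1}^{2^\lambda} \max_\theta \norm{\Pi^\theta \Pi^{\pi_i(\theta)}} .
\]
The argument I have in mind lifts the sum via $V = \sum_\theta \Pi^\theta \tensor \ket{e_\theta}$; then $V^\dagger V = \sum_\theta \Pi^\theta$ (using idempotence), so $\norm{\sum_\theta \Pi^\theta} = \norm{V V^\dagger}$, where $V V^\dagger$ is the block operator whose $(\theta,\theta')$ block is $\Pi^\theta \Pi^{\theta'}$. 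Mutual orthogonality guarantees that the $2^\lambda$ permutations partition the block positions, so $V V^\dagger$ splits as a sum of $2^\lambda$ block-permutation operators, each supported on distinct rows and columns and hence of norm $\max_\theta \norm{\Pi^\theta \Pi^{\pi_i(\theta)}}$; the triangle inequality finishes it.

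Third comes the pairwise estimate, which I expect to be the main obstacle. I must show $\norm{\Pi^\theta \Pi^{\theta'}} \leq 2^{-d(\theta,\theta')/2}$, where $d(\theta,\theta')$ is the Hamming distance. Expanding gives $\Pi^\theta \Pi^{\theta'} = \sum_{x,x'} \braket{x^\theta}{x'^{\theta'}}\, \ketbra{x^\theta}{x'^{\theta'}} \tensor B^\theta_x B^{\theta'}_{x'} \tensor C^\theta_x C^{\theta'}_{x'}$, and the delicate point is that $B^\theta_x, C^\theta_x$ need not commute across different $\theta$. The Wiesner overlap satisfies $\abs{\braket{x^\theta}{x'^{\theta'}}} = 2^{-d(\theta,\theta')/2}$ when $x$ and $x'$ agree on the coordinates where $\theta$ and $\theta'$ agree (and vanishes otherwise), so every scalar prefactor is bounded by $2^{-d(\theta,\theta')/2}$; the real work is to exploit the within-basis orthogonality of Bob's and Charlie's projectors to show these uniformly small contributions cannot accumulate, so that the full norm stays at the single-term level $\max_{x,x'}\abs{\braket{x^\theta}{x'^{\theta'}}}$. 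This step is the heart of the matter and genuinely needs more than a naive triangle inequality.

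Finally, I would instantiate the permutations using the group structure of $\{0,1\}^\lambda$ under $\xor$: take $\pi_i(\theta) = \theta \xor i$ for $i \in \{0,1\}^\lambda$, which are manifestly mutually orthogonal. Since $d(\theta, \theta \xor i)$ equals the Hamming weight $\abs{i}$ of $i$, combining the lemma with the pairwise bound yields
\[
	\norm{\sum_\theta \Pi^\theta}
	\leq
	\sum_{i \in \{0,1\}^\lambda} 2^{-\abs{i}/2}
	=
	\prod_{j=1}^\lambda \left(1 + \tfrac{1}{\sqrt{2}}\right)
	=
	\left(1 + \tfrac{1}{\sqrt{2}}\right)^\lambda .
\]
Dividing by $2^\lambda$ gives $\frac{1}{2^\lambda}\left(1 + \tfrac{1}{\sqrt{2}}\right)^\lambda = \left(\tfrac{1}{2} + \tfrac{1}{2\sqrt{2}}\right)^\lambda$, the claimed bound.
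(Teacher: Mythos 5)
First, a point of reference: the paper does not prove this statement. \cref{theorem:TFKW13} is imported verbatim from \cite{TFKW13} as a known result (only \cref{theorem:TFKW13-ours} is proved in \cref{section:proofs}), so there is no in-paper proof to compare against; what you have written is a reconstruction of the original Tomamichel--Fehr--Kaniewski--Wehner argument. Your skeleton is the right one and matches theirs: Naimark dilation to reduce to projective measurements, the observation that the left-hand side equals $\Tr[(\E_\theta \Pi^\theta)\rho] \leq \frac{1}{2^\lambda}\norm{\sum_\theta \Pi^\theta}$, the decomposition of the block operator $V V^\dagger$ along the $2^\lambda$ mutually orthogonal permutations $\theta \mapsto \theta \xor i$, and the binomial resummation $\frac{1}{2^\lambda}\sum_{i} 2^{-\abs{i}/2} = \left(\frac{1}{2} + \frac{1}{2\sqrt{2}}\right)^\lambda$. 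Steps 1, 2 and 4 are correct as written.

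The genuine gap is exactly where you flag it: the pairwise estimate $\norm{\Pi^\theta \Pi^{\theta'}} \leq 2^{-d(\theta,\theta')/2}$ is required but never established, and the entire bound rests on it. This is not a routine omission. For fixed $\theta, \theta'$ at Hamming distance $d$ there are $2^{d}$ pairs $(x,x')$ with nonvanishing overlap, each contributing a scalar of modulus $2^{-d/2}$, so the triangle inequality only gives $2^{d/2}$ and the argument collapses; moreover $B^\theta_x$ and $B^{\theta'}_{x'}$ need not commute, so one cannot simply diagonalize. The original proof closes this with a dedicated operator inequality (Lemma 2 of \cite{TFKW13}), a Cauchy--Schwarz/Schur-type norm bound for operators of the form $\sum_{x,x'} P_x M_{x x'} Q_{x'}$ built from two families of mutually orthogonal projectors, and it is that lemma --- not the Wiesner overlap computation --- that carries the weight of the theorem. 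As it stands, your proposal is a correct plan whose central lemma is missing, so it does not yet constitute a proof.
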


Using standard techniques, we can recast this theorem in a context where Alice sends to Bob and Charlie a random Wiesner state and Bob and Charlie split this state among themselves via some CPTP map $\Phi$.

\begin{corollary}
\label{theorem:TFKW13-ours}
Let $\lambda \in \N^+$.
For any Hilbert spaces $\HS_B$ and $\HS_C$, any collections of POVMs
\begin{equation}
	\left\{
		\left\{B^\theta_x\right\}_{x \in \{0,1\}^\lambda}
	\right\}_{\theta \in \{0,1\}^\lambda}
	\text{ and }
	\left\{
		\left\{C^\theta_x\right\}_{x \in \{0,1\}^\lambda}
	\right\}_{\theta \in \{0,1\}^\lambda}
\end{equation}
on these Hilbert spaces, and any CPTP map $\Phi : \mc{D}(\Qb(\lambda)) \to \mc{D}(\HS_B \tensor \HS_C)$, we have that
\begin{equation}
	\E_\theta
	\E_x
	\Tr\left[
		\left(B_x^\theta \tensor C_x^\theta\right)
		\Phi\left(\ketbra{x^\theta}\right)
	\right]
	\leq
	\left(\frac{1}{2} + \frac{1}{2\sqrt{2}}\right)^\lambda
	.
\end{equation}
\end{corollary}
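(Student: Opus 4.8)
The plan is to reduce the corollary to \cref{theorem:TFKW13} by exhibiting a single tripartite state $\rho$ whose measurement statistics reproduce the left-hand side of the corollary. The natural candidate is obtained by feeding half of a maximally entangled state through the splitting map $\Phi$. Concretely, I would set $\rho = (\Id \tensor \Phi)\left(\ketbra{\text{EPR}_\lambda}\right)$, where $\Phi$ acts on the second register of the EPR pair. Since $\Phi$ is CPTP, $\rho$ is a genuine density operator in $\mc{D}(\Qb(\lambda) \tensor \HS_B \tensor \HS_C)$, and hence an admissible input to \cref{theorem:TFKW13}.

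The key structural fact I would then invoke is that, because the Hadamard operator has real entries, each Wiesner state $\ket{x^\theta}$ is invariant under complex conjugation in the computational basis. As a consequence, the maximally entangled state admits the expansion $\ket{\text{EPR}_\lambda} = \frac{1}{\sqrt{2^\lambda}} \sum_x \ket{x^\theta} \tensor \ket{x^\theta}$ in \emph{every} Wiesner basis $\theta$ simultaneously. This says precisely that measuring the first register of $\ket{\text{EPR}_\lambda}$ with the projective POVM $\left\{\ketbra{x^\theta}\right\}_x$ yields outcome $x$ with probability $2^{-\lambda}$ while collapsing the second register to $\ket{x^\theta}$, which is exactly the ``Alice sends a uniformly random Wiesner state'' behaviour targeted by the corollary.

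Substituting this expansion into $\rho$ and evaluating the trace appearing in \cref{theorem:TFKW13}, the orthonormality relations $\braket{x^\theta}{x'^\theta} = \delta_{x, x'}$ collapse the double sum coming from the bra and ket of the EPR state, retaining only the diagonal terms. A short computation then gives
\[
	\E_\theta \sum_x \Tr\left[\left(\ketbra{x^\theta} \tensor B_x^\theta \tensor C_x^\theta\right) \rho\right]
	=
	\E_\theta \E_x \Tr\left[\left(B_x^\theta \tensor C_x^\theta\right) \Phi\left(\ketbra{x^\theta}\right)\right],
\]
where the normalization factor $\frac{1}{2^\lambda}$ inherited from $\ketbra{\text{EPR}_\lambda}$ combines with the sum over $x$ to form the uniform average $\E_x$. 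The claimed bound is then immediate upon applying \cref{theorem:TFKW13} to $\rho$.

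The one subtlety I expect to be the main obstacle, or at least the step most prone to error, is the complex-conjugation bookkeeping in the EPR expansion: for a \emph{generic} orthonormal basis the second register would carry $\overline{\ket{x^\theta}}$ rather than $\ket{x^\theta}$, and the reduction would not go through cleanly. The whole argument rests on the fact that Wiesner states have real amplitudes, so that conjugation acts trivially on them; I would therefore state this observation explicitly and justify it from the realness of $H$, rather than treating the ``write EPR in any basis'' identity as a black box.
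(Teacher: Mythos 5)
Your proposal is correct and follows essentially the same route as the paper: both apply \cref{theorem:TFKW13} to the state $\rho = (\Id \tensor \Phi)\ketbra{\text{EPR}_\lambda}$ and use the basis-independence of the maximally entangled state under real (Wiesner) bases to identify the resulting expression with the left-hand side of the corollary. Your explicit flagging of the complex-conjugation issue is exactly the point the paper's identity $\frac{1}{2^\lambda}\sum_{r,s}\ketbra{r}{s}\tensor\Phi(\ketbra{r}{s}) = \frac{1}{2^\lambda}\sum_{r,s}\ketbra{r^\theta}{s^\theta}\tensor\Phi(\ketbra{r^\theta}{s^\theta})$ relies on, so no gap remains.
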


The proof is relegated to \cref{section:proofs}, but conceptually follows from a two-step argument.
First, we only consider states of the form $\left(\Id \tensor \Phi\right)\ketbra{\text{EPR}_\lambda}$ for some CPTP map $\Phi$ and where Alice keeps the intact subsytems from the EPR pairs.
Then, we apply the correspondence between Alice measuring her half of an EPR pair in a random Wiesner basis and her sending a random Wiesner state.
This correspondence is similar to the one used in the Shor-Preskill proof of security for the BB84 QKD protocol \cite{SP00}.

\cref{theorem:TFKW13-ours} can be seen as the source of ``uncloneability'' for our upcoming protocols.
When Alice sends a state $\ketbra{x^\theta}$, picked uniformly at random, to Bob and Charlie, she has a guarantee that it is unlikely for both of them to learn~$x$ even if she later divulges~$\theta$.

It is worth noting that \cref{theorem:TFKW13} and \cref{theorem:TFKW13-ours} have no computational or hardness assumptions.
This makes them an ideal tool with which to build uncloneable encryption schemes.

\subsection{Oracles and Quantum-Secure Pseudorandom Functions} %
\label{section:oracles}

A quantum-secure pseudorandom function is a keyed function which appears random to an efficient quantum adversary who only sees its input/output behaviour and is ignorant of the particular key being used.
We formally define this notion with the help of oracles.
Quantum accessible oracles have been previously studied in the literature, for example in \cite{BBC+01,BDF+11,Unr15b}.

Given a function $H \in \Bool(n,m)$, a quantum circuit $\texttt{C}$ is said to have oracle access to $H$, denoted $\texttt{C}^H$, if we add to its gate set a gate implementing the unitary operator $O^H \in \mc{U}(\Qb(n)_Q \tensor \Qb(m)_R)$ defined on computational basis states by
\begin{equation}
	\ket{x}_Q \tensor \ket{y}_R \mapsto \ket{x}_Q \tensor \ket{y \xor H(x)}_R\,.
\end{equation}
Colloquially, we are giving $\texttt{C}$ a ``black box'' which computes the function $H$.
Note that if $H, H' \in \Bool(n,m)$ are two functions, we can obtain the circuit $C^{H'}$ from $C^H$ by replacing every instance of the $O^H$ gate by the $O^{H'}$ gate.

We can now give a definition, inspired by the one in \cite{Zha12}, of a quantum-secure pseudorandom function.

\begin{definition}[Quantum-Secure Pseudorandom Function]
\label{def:qPRF}
A \emph{quantum-secure pseudorandom function} $\mc{F}$ is a collection of functions
\begin{equation}
	\mc{F}
	=
	\left\{
		f_\lambda:
		\{0,1\}^\lambda \times\{0,1\}^{\ell_\text{In}(\lambda)}
		\to \{0,1\}^{\ell_\text{Out}(\lambda)}\right\}_{\lambda \in \N^+}
\end{equation}
where $\ell_\text{In}, \ell_\text{Out} : \N^+ \to \N^+$ and such that:
\begin{enumerate}
	\item
		There is an efficient quantum circuit $\texttt{F} = \{\texttt{F}_\lambda\}_{\lambda \in \N^+}$ such that $\texttt{F}_\lambda$ implements the CPTP map $F_\lambda(\rho) = U_\lambda \rho U_\lambda^\dag$ where $U_\lambda \in \mc{U}(\mc{Q}(\lambda + \ell_\text{In}(\lambda) + \ell_\text{Out}(\lambda)))$ is defined by
		\begin{equation}
			U_\lambda \big(\ket{k} \ket{a} \ket{b}\big) = \ket{k} \ket{a} \ket{b \xor f_\lambda(k, a)}\,.
		\end{equation}
	\item
		For all efficient quantum circuits $\texttt{D} = \{\texttt{D}^H_\lambda\}_{\lambda \in \N^+}$ having oracle access to a function of the form $H \in \Bool(\ell_\text{In}(\lambda), \ell_\text{Out}(\lambda))$, each implementing a CPTP map of the form $D^H_\lambda : \mc{D}(\C) \to \mc{D}(\mc{Q})$, there is a negligible function $\eta$ such that:
		\begin{equation}
			\abs{
				\E_{k}
				\Tr\big[
					\ketbra{0} D^{f_\lambda(k, \cdot)}_\lambda(1)
				\big]
				-
				\E_{H}
				\Tr\big[
					\ketbra{0} D^{H}_\lambda(1)
				\big]
			}
			\leq
			\eta(\lambda)\,.
		\end{equation}
\end{enumerate}
\end{definition}

We should think of $\texttt{D}$ as a circuit which attempts to distinguish two different cases: is it given oracle access to an instance of the pseudorandom function, which is to say $f(k, \cdot) : \{0,1\}^{\ell_\text{In}(\lambda)} \to \{0,1\}^{\ell_\text{Out}(\lambda)}$ for a randomly sampled $k \in \{0,1\}^\lambda$?
Or to a function that was sampled truly at random, $H \in \Bool(\ell_\text{In}(\lambda), \ell_\text{Out}(\lambda))$?

The circuit takes no input and produces a single bit of output, via measuring a single qubit in the computational basis.
The bound given in the definition ensures that the probability distribution of the output does not change by much in both scenarios.

In his work on quantum-secure pseudorandom functions \cite{Zha12}, Zhandry showed that certain pseudorandom functions that are secure against classical adversaries are insecure against quantum adversaries.
Fortunately, Zhandry also showed that some common constructions of pseudorandom functions remain secure against quantum adversaries.

\section{Uncloneable Encryption}   %
\label{sec:uncloneable encryption} %

The encryption of classical plaintexts into classical ciphertexts has been extensively studied.
The study of encrypting quantum plaintexts into quantum ciphertexts has also received some attention, for example in \cite{ABF+16}.
Uncloneable encryption is a security notion for classical plaintexts which is impossible to achieve in any meaningful way with classical ciphertexts.
Thus,  we need to formally define a notion of quantum encryptions for classical messages.

\subsection{Quantum Encryptions of Classical Messages} %

A quantum encryption of classical messages scheme is a procedure which takes as input a plaintext and a key, in the form of classical bit strings, and produces a ciphertext in the form of a quantum state.
We model these schemes as efficient quantum circuits and CPTP maps where classical bit strings are identified with computational basis states: $s \leftrightarrow \ketbra{s}$.
Our schemes are parametrized by a security parameter $\lambda$.
In general, the message size $n = n(\lambda)$, the key size $\kappa = \kappa(\lambda)$, and the size of the ciphertext $\ell = \ell(\lambda)$ may depend on $\lambda$.
This is formalized in \cref{definition:QECM}.

\begin{definition}[Quantum Encryption of Classical Messages]
\label{definition:QECM}
A \emph{quantum encryption of classical messages} (QECM) scheme is a triplet of efficient quantum circuits $\mc{S} = (\texttt{Key}, \texttt{Enc}, \texttt{Dec})$ implementing CPTP maps of the form
\begin{itemize}
	\item
		$Key_\lambda : \mc{D}(\C) \to \mc{D}(\HS_{K,\lambda})$,
	\item
		$Enc_\lambda : \mc{D}(\HS_{K, \lambda} \tensor \HS_{M,\lambda}) \to \mc{D}(\HS_{T,\lambda})$, and
	\item
		$Dec_\lambda : \mc{D}(\HS_{K, \lambda}  \tensor \HS_{T, \lambda}) \to \mc{D}(\HS_{M,\lambda})$
\end{itemize}
where $\HS_{M, \lambda} = \Qb(n(\lambda))$ is the plaintext space, $\HS_{T, \lambda} = \Qb(\ell(\lambda))$ is the ciphertext space, and $\HS_{K, \lambda} = \Qb(\kappa(\lambda))$ is the key space for functions $n, \ell, \kappa : \N^+ \to \N^+$.

For all $\lambda \in \N^+$, $k \in \{0,1\}^{\kappa(\lambda)}$, and $m \in \{0,1\}^{n(\lambda)}$, the maps must satisfy
\begin{equation}
\label{equation:QECM-cc}
	\Tr[\ketbra{k} Key(1)] > 0
	\implies
	\Tr[\ketbra{m} Dec_k \circ Enc_k \ketbra{m}] = 1
\end{equation}
where $\lambda$ is implicit, $Enc_k$ is the CPTP map defined by $\rho \mapsto Enc(\ketbra{k} \tensor \rho)$, and we define $Dec_k$ analogously.
\end{definition}

A short discussion on the key generation circuit, $\texttt{Key}$, is in order.
First, note that $\texttt{Key}$ takes no input.
Indeed, the domain of $Key_\lambda$ is $\mc{D}(\C)$ and $\C$ is the state space of zero qubits.
In particular, there is a single valid quantum state on $\C$: $\mc{D}(\C) = \{1\}$.
To generate a classical key to be used by the encryption and decryption circuits $\texttt{Enc}_\lambda$ and $\texttt{Dec}_\lambda$, a party runs the circuit $\texttt{Key}_\lambda$ and obtains the quantum state $Key_\lambda(1)$.
This quantum state is then measured in the computational basis and the result of this measurement is used as the key.
We then see that \cref{equation:QECM-cc} is a correctness condition which imposes that, for all keys that may be generated, a valid ciphertext is always correctly decrypted.

\subsection{Security Notions} %

Now that we have formal definition for QECM schemes, we can define security notions for these schemes.
We define three such notions:
\begin{enumerate}
	\item
		Indistinguishable security.
		Conceptually inspired by the original security notion of indistinguishable encryptions \cite{GM84}, which considers classical plaintexts and classical ciphertexts, and similar in details to an analogue definition in \cite{ABF+16} which considers quantum plaintexts and quantum ciphertexts, this security notion considers classical plaintexts and quantum ciphertexts.
		It is formally stated in \cref{definition:IND-security}.
	\item
		Uncloneable security.
		This security notion is novel to this work and captures, in the broadest sense, what we mean by an ``uncloneable encryption scheme''.
		This security notion is defined in \cref{definition:UNC-security} and is paramatrized by a real value $0 \leq t \leq n$, where $n$ is the message size.
		The case where $t = 0$ is ideal and $t = n$ is trivial.
		In particular, no encryption scheme with classical ciphertexts may achieve $t$-uncloneable security for $t < n$.
	\item
		Uncloneable-indistinguishable security.
		This security notion is also novel to this work.
		It can be seen as a combination of indistinguishable and uncloneable security.
		It is formally defined in \cref{definition:UNC-IND-security}.
\end{enumerate}

Each of these security notions is defined in two steps.
First, we define a type of attack (\cref{definition:IND-attack,definition:UNC-attack,definition:UNC-IND-attack}).
Then, we say that the QECM scheme achieves the given security notion if all admissible attacks have their winning probability appropriately bounded (\cref{definition:IND-security,definition:UNC-security,definition:UNC-IND-security}).
The definitions for uncloneable security and uncloneable-indistinguishable security will formalize the games which we described in \cref{sec:SummaryContributions}.

Note that many classical encryption schemes which are secure against quantum adversaries, such as the one-time pad, are indistinguishable secure but satisfy neither uncloneable security notions as their ciphertexts can alway be perfectly copied.
We also discuss in \cref{section:conjugate-encryption} a scheme which offers non-trivial uncloneable security but is not in any way uncloneable-indistinguishable secure.

We first define our notion of indistinguishable security.

\begin{definition}[Distinguishing Attack]
\label{definition:IND-attack}
Let $\mc{S}$ be a QECM scheme.
A \emph{distinguishing attack} against $\mc{S}$ is a pair of efficient quantum circuits $\mc{A} = \left(\texttt{G}, \texttt{A}\right)$ implementing CPTP maps of the form
\begin{itemize}
	\item
		$G_\lambda : \mc{D}(\C) \to \mc{D}(\mc{H}_{S, \lambda} \tensor \mc{H}_{M, \lambda})$ and
	\item
		$A_\lambda : \mc{D}(\mc{H}_{S,\lambda} \tensor \mc{H}_{T, \lambda}) \to \mc{D}(\mc{Q})$
\end{itemize}
where $\mc{H}_{S, \lambda} = \mc{Q}(s(\lambda))$ for a function $s : \N^+ \to \N^+$ and $\mc{H}_{M, \lambda}$ and $\mc{H}_{T, \lambda}$ are as defined in $\mc{S}$.
\end{definition}

\begin{definition}[Indistinguishable Security]
\label{definition:IND-security}
Let $\mc{S}$ be an QECM scheme.
For a fixed and implicit $\lambda$, we define the CPTP map $Enc_{k}^1 : \mc{D}(\mc{H}_{M, \lambda}) \to \mc{D}(\mc{H}_{T, \lambda})$ by
\begin{equation}
	\rho
	\mapsto
	\sum_{m \in \{0,1\}^n}
	\Tr\left[
		\ketbra{m} \rho
	\right]
	\cdot
	Enc_{k}(\ketbra{m})
\end{equation}
and the CPTP map $Enc_k^0 : \mc{D}(\mc{H}_{M, \lambda}) \to \mc{D}(\mc{H}_{T, \lambda})$ by
\begin{equation}
	\rho \mapsto Enc_k(\ketbra{0^n})
\end{equation}
where $0^n \in \{0,1\}^n$ is the all zero bit string.

Then, we say that $\mc{S}$ is \emph{indistinguishable secure} if for all distinguishing attacks $\mc{A}$ against $\mc{S}$ there exists a negligible function $\eta$ such that
\begin{equation}
	\E_b
	\E_{k \gets \mc{K}}
	\Tr\left[
		\ketbra{b}
		A_\lambda
		\circ 
		\left(
			\Id_S \tensor Enc_k^b
		\right)
		\circ
		G(1)
	\right]
	\leq
	\frac{1}{2}
	+
	\eta(\lambda)
\end{equation}
where $\lambda$ is implicit on the left-hand side, $b \in \{0,1\}$,  and $\mc{K_\lambda}$ is the random variable distributed on $\{0,1\}^{\kappa(\lambda)}$ such that $\Pr\left[\mc{K}_\lambda = k\right]=\Tr\left[\ketbra{k} Key_\lambda(1)\right]$.
\end{definition}

In \cref{definition:IND-security}, the map $Enc_k^0$ should be seen as discarding whatever plaintext was given and producing the encryption of the all zero bit string.
On the other hand, $Enc_k^1$ is the map which first measures the state given in the computational basis, to ensure that the plaintext is indeed a classical message, and then encrypts this message.
We say that a QECM scheme has indistinguishable security if no efficient adversary can distinguish between both of these scenarios with more then a negligible advantage.
This security notion allows us to show that the schemes we define do offer a level of security as encryption schemes.

Next, we formalize the intuitive definition for uncloneable security as given by the game described in \cref{sec:SummaryContributions}.
In \cref{figure:UA}, we sketch out the relation between the various CPTP maps and the underlying Hilber spaces considered in this definition.

\begin{definition}[Cloning Attack]
\label{definition:UA}
\label{definition:UNC-attack}
Let $\mc{S}$ be a QECM scheme.
A \emph{cloning attack} against $\mc{S}$ is a triplet of efficient quantum circuits $\mc{A} = (\texttt{A}, \texttt{B}, \texttt{C})$ implementing CPTP maps of the form
\begin{itemize}
	\item
		$A_\lambda : \mc{D}(\HS_{T, \lambda}) \to \mc{D}(\HS_{B, \lambda} \tensor \HS_{C, \lambda})$,
	\item
		$B_\lambda : \mc{D}(\HS_{K, \lambda} \tensor \HS_{B, \lambda}) \to \mc{D}(\mc{H}_{M, \lambda})$, and
	\item
		$C_\lambda : \mc{D}(\HS_{K, \lambda} \tensor \HS_{C, \lambda}) \to \mc{D}(\mc{H}_{M, \lambda})$
\end{itemize}
where $\mc{H}_{B, \lambda} = \Qb(\beta(\lambda))$ and $\mc{H}_{C, \lambda} = \Qb(\gamma(\lambda))$ for some functions $\beta, \gamma : \N^+ \to \N^+$ and $\HS_{K, \lambda}$, $\HS_{M, \lambda}$, and $\HS_{T, \lambda}$ are as defined by $\mc{S}$.
\end{definition}

\begin{figure}
\begin{center}
\begin{tikzpicture}[x=1cm,y=1cm]

	\node (HMI) at (0, 0)   {$\mc{H}_M$};
	\node (HT)  at (2, 0)   {$\mc{H}_T$};
	\node (HB)  at (5, 0.5) {$\mc{H}_B$};
	\node (HC)  at (5,-0.5) {$\mc{H}_C$};
	\node (HMB) at (7, 0.5) {$\mc{H}_M$};
	\node (HMC) at (7,-0.5) {$\mc{H}_M$};

	\draw[->] (HMI)   -- node [midway,above] {$Enc_k$} (HT);
	\draw     (HT)    -- node [midway,above] {$A$}   (3.5,0);
	\draw[->] (3.5,0) -- (HB);
	\draw[->] (3.5,0) -- (HC);
	\draw[->] (HB)    -- node [midway,above] {$B_k$} (HMB);
	\draw[->] (HC)    -- node [midway,above] {$C_k$} (HMC);	

\end{tikzpicture}
\end{center}
\label{figure:UNC-Attack}
\caption{
	\label{figure:UA}
	Schematic representation of the maps considered in a cloning attack as given in \cref{definition:UNC-attack}.
}
\end{figure}
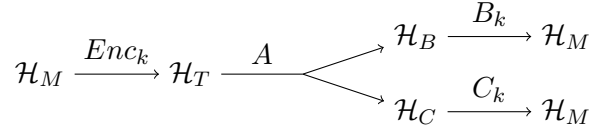

\begin{definition}[Uncloneable Security]
\label{definition:US}
\label{definition:UNC-security}
A QECM scheme $\mc{S}$ is \emph{$t(\lambda)$-uncloneable secure} if for all cloning attacks $\mc{A}$ against $\mc{S}$ there exists a negligible function $\eta$ such that
\begin{equation}
	\E_{m}
	\E_{k \gets \mc{K}}
	\Tr\left[
		\left(\ketbra{m} \tensor \ketbra{m}\right)
		\left(B_k \tensor C_k\right) \circ A \circ Enc_k
		\left(\ketbra{m}\right)
	\right]
	\leq
	2^{-n+t(\lambda)} + \eta(\lambda)
\end{equation}
where $\lambda$ is implicit on the left-hand side, $\mc{K}_\lambda$ is a random variable distributed on $\{0,1\}^{\kappa(\lambda)}$ such that $\Pr\left[\mc{K}_\lambda = k\right] = \Tr\left[\ketbra{k} Key_\lambda(1)\right]$ and $B_k$ is the CPTP map defined by $\rho \mapsto B(\ketbra{k} \tensor \rho)$ and similarly for $C_k$.

If $\mc{S}$ is $0$-uncloneable secure, we say that it is simply \emph{uncloneable secure}.
\end{definition}

We note that any encryption which produces classical ciphertexts cannot be $t$-uncloneable secure for any $t < n$.
Indeed, an attack $\mc{A}$ where $\texttt{A}$ copies the classical ciphertext and where $\texttt{B} = \texttt{C} = \texttt{Dec}$ succeeds with probability $1$.

Our definition of uncloneable security is with respect to messages sampled uniformly at random.
However, if the length of the message is fixed, $t$-uncloneable security implies a similar security notion for messages sampled according to other distributions.
We formalize this in the next theorem.

\begin{theorem}
\label{theorem:unif->t}
Let $\mc{S}$ be a QECM scheme which is $t$-uncloneable secure and whose message size is constant, i.e.: $n(\lambda) = n$.
Let $\mc{M}$ be a random variable distributed over $\{0,1\}^{n}$ with min-entropy $h$.
Then, for any cloning attack $\mc{A}$ on $\mc{S}$ there is a negligible function $\eta$ such that
\begin{equation}
	\E_{m \gets \mc{M}}
	\E_{k \gets \mc{K}}
	\Tr\left[
		\left(\ketbra{m} \tensor \ketbra{m}\right)
		\left(B_k \tensor C_k\right)
		\circ
		A
		\circ
		Enc_k
		\ketbra{m}
	\right]
	\leq
	2^{-h+t(\lambda)} + \eta(\lambda)
\end{equation}
where $\lambda$ is implicit on the left-hand side.
\end{theorem}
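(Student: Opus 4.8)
The plan is to reduce the statement to \cref{definition:UNC-security} by a direct averaging argument, using only the fact that a distribution of min-entropy $h$ assigns probability at most $2^{-h}$ to any single message. Fix a cloning attack $\mc{A} = (\texttt{A}, \texttt{B}, \texttt{C})$ against $\mc{S}$ and, for each $m \in \{0,1\}^n$, abbreviate the per-message winning probability by
\begin{equation}
	p_m(\lambda)
	=
	\E_{k \gets \mc{K}}
	\Tr\left[
		\left(\ketbra{m} \tensor \ketbra{m}\right)
		\left(B_k \tensor C_k\right)
		\circ A \circ Enc_k
		\ketbra{m}
	\right].
\end{equation}
Each $p_m(\lambda)$ is a probability, so in particular $p_m(\lambda) \geq 0$; non-negativity is the only property of these quantities that the argument requires.

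First I would apply $t$-uncloneable security to this \emph{same} attack $\mc{A}$. Since $\mc{S}$ is $t$-uncloneable secure, there is a negligible function $\eta$ with $\E_m p_m(\lambda) \leq 2^{-n+t(\lambda)} + \eta(\lambda)$, the expectation being uniform over $\{0,1\}^n$; multiplying by $2^n$ gives $\sum_m p_m(\lambda) \leq 2^{t(\lambda)} + 2^n \eta(\lambda)$. Next I would pass to $\mc{M}$: by the definition of min-entropy, $\Pr[\mc{M} = m] \leq 2^{-h}$ for every $m$, and combining this with non-negativity of the $p_m(\lambda)$ yields
\begin{equation}
	\E_{m \gets \mc{M}} p_m(\lambda)
	=
	\sum_{m \in \{0,1\}^n} \Pr[\mc{M} = m]\, p_m(\lambda)
	\leq
	2^{-h} \sum_{m \in \{0,1\}^n} p_m(\lambda)
	\leq
	2^{-h}\left(2^{t(\lambda)} + 2^n \eta(\lambda)\right)
	=
	2^{-h + t(\lambda)} + 2^{n-h}\eta(\lambda).
\end{equation}
Setting $\eta'(\lambda) = 2^{n-h}\eta(\lambda)$ gives exactly the claimed bound $2^{-h + t(\lambda)} + \eta'(\lambda)$, and $\eta' \geq \eta$ since $h \leq n$.

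The single point requiring care — and the place where the constant-message-size hypothesis is spent — is verifying that $\eta'$ is again negligible. Because $n(\lambda) = n$ is constant and $\mc{M}$ is a fixed distribution on the fixed set $\{0,1\}^n$, its min-entropy $h$ is a constant in $[0,n]$, so the prefactor $2^{n-h}$ is a fixed constant and $\eta' = 2^{n-h}\eta$ stays negligible. Had $n$ been allowed to grow with $\lambda$, this prefactor could blow up faster than $\eta$ decays, which is precisely why the theorem insists on a message length that does not grow with $\lambda$. I do not expect any genuine obstacle beyond this bookkeeping: the whole proof is a one-line averaging estimate against the uniform-message guarantee, with the constant-length assumption doing the work of keeping the error term negligible.
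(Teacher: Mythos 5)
Your proposal is correct and is essentially the same argument as the paper's: bound each $\Pr[\mc{M}=m]$ by $2^{-h}$, reduce to the uniform-message guarantee of $t$-uncloneable security, and absorb the factor $2^{n-h}$ into the negligible term using the constant-message-size hypothesis. No differences worth noting.
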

\begin{proof}
For all $k \in \{0,1\}^{\kappa(\lambda)}$ and $m \in \{0,1\}^{n}$, define
\begin{equation}
	p(k, m)
	=
	\Tr\left[
		\left(\ketbra{m} \tensor \ketbra{m}\right)
		\left(B_k \tensor C_k\right)
		\circ
		A
		\circ
		Enc_k
		\left(\ketbra{m}\right)
	\right].
\end{equation}
Recalling the min-entropy of $\mc{M}$ and that $\mc{S}$ is $t$-uncloneable, we may write
\begin{align}
	&
	\E_{m \gets \mc{M}}
	\E_{k \gets \mc{K}}
	\Tr\left[
		\left(\ketbra{m} \tensor \ketbra{m}\right)
		\left(B_k \tensor C_k\right) \circ A \circ Enc_k
		\left(\ketbra{m}\right)
	\right]
	\\
	&=
	\sum_{m \in \{0,1\}^n} \Pr\left[\mc{M} = m\right] \E_{k \gets \mc{K}} p(k, m)
	\leq
	2^{-h} \cdot 2^n
	\E_{m}
	\E_{k \gets \mc{K}} p(k,m)
	\leq
	2^{-h}\left(2^{t} + 2^n \eta(\lambda)\right). \nonumber
\end{align}
Noting that $\lambda \mapsto 2^{-h+n}\eta(\lambda)$ is a negligible function concludes the proof.
\end{proof}

Finally, we formalize the notion of uncloneable-indistinguishable security (see \cref{sec:SummaryContributions} for a description in terms of a game, and \cref{figure:UNC-IND-attack} for the relation between the various CPTP maps and the underlying Hilbert spaces).

\begin{definition}[Cloning-Distinguishing Attack]
\label{definition:UNC-IND-attack}
Let $\mc{S}$ be a QECM scheme.
A \emph{cloning-distinguishing attack} against $\mc{S}$ is a tuple $\mc{A} = \left(\texttt{G}, \texttt{A}, \texttt{B}, \texttt{C}\right)$ of efficient quantum circuits implementing CPTP maps of the form
\begin{itemize}
	\item
		$G_\lambda : \mc{D}(\C) \to \mc{D}(\mc{H}_{S, \lambda} \tensor \mc{H}_{M, \lambda})$,
	\item
		$A_\lambda : \mc{D}(\mc{H}_{S, \lambda} \tensor \mc{H}_{T, \lambda}) \to \mc{D}(\mc{H}_{B, \lambda} \tensor \mc{H}_{C, \lambda})$,
	\item
		$B_\lambda : \mc{D}(\mc{H}_{K, \lambda} \tensor \mc{H}_{B, \lambda}) \to \mc{D}(\mc{Q})$, and
	\item
		$C_\lambda : \mc{D}(\mc{H}_{K, \lambda} \tensor \mc{H}_{C, \lambda}) \to \mc{D}(\mc{Q})$
\end{itemize}
where $\mc{H}_{S,\lambda} = \mc{Q}(s(\lambda))$, $\mc{H}_{B, \lambda} = \mc{Q}(\beta(\lambda))$, and $\mc{H}_{C, \lambda} = \mc{Q}(\alpha(\lambda))$ for some functions $s, \alpha, \beta : \N^+ \to \N^+$ and all other Hilbert spaces are as defined by $\mc{S}$.
\end{definition}

\begin{figure}[b] 
\begin{center}
\begin{tikzpicture}[x=1cm,y=1cm]

	\node (I)   at (0,  0) {$\C$};
	\node (N1)  at (2,  0) {};
	\node (HM)  at (3, -0.5) {$\mc{H}_M$};
	\node (N2)  at (3,  0.5) {};
	\node (HT)  at (5, -0.5) {$\mc{H}_T$};
	\node (HS)  at (4,  0.5) {$\mc{H}_S$};
	\node (HB)  at (9,  0.5) {$\mc{H}_B$};
	\node (HC)  at (9, -0.5) {$\mc{H}_C$};
	\node (HMB) at (11, 0.5) {$\mc{Q}$};
	\node (HMC) at (11,-0.5) {$\mc{Q}$};

	\draw     (I)    -- node [midway,above] {$G$} (2,0);
	\draw[->] (2,0)  -- (HM);
	\draw[->] (HM)   -- node [midway,above] {$Enc^b_k$} (HT);
	\draw     (2,0)  -- (3,0.5);
	\draw[->] (3,0.5)  -- (HS);
	\draw     (HT)   -- (6,0);
	\draw     (6,0)  -- node [midway,above] {$A$} (8, 0);
	\draw     (HS)   -- (5,0.5);
	\draw     (5,0.5)  -- (6,0);
	\draw[->] (8,0)  -- (HB);
	\draw[->] (8,0)  -- (HC);
	\draw[->] (HB)   -- node [midway,above] {$B_k$} (HMB);
	\draw[->] (HC)   -- node [midway,above] {$C_k$} (HMC);

\end{tikzpicture}
\end{center}
\caption{
	\label{figure:UNC-IND-attack}
	Relation between the CPTP maps and Hilbert spaces considered in a cloning-distinguishing attack as described in \cref{definition:UNC-IND-attack}.
}
\end{figure}
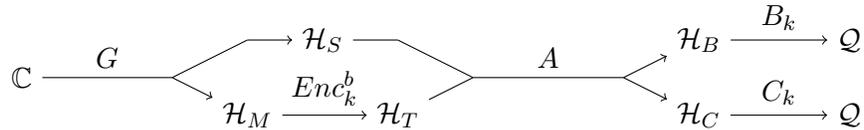
\begin{definition}[Uncloneable-Indistinguishable Security]
\label{definition:UNC-IND-security}
Let $\mc{S}$ be a QECM scheme and define $Enc_k^0$ and $Enc_k^1$ as in \cref{definition:IND-security}.

We say that $\mc{S}$ is \emph{uncloneable-indistinguishable secure} if for all cloning-distinguishing attacks $\mc{A}$ there exists a negligible function $\eta$ such that
\begin{equation}
	\E_b
	\E_{k \gets \mc{K}}
	\Tr\left[
		\left(\ketbra{b} \tensor \ketbra{b}\right)
		\left(B_k \tensor C_k\right)
		\circ
		A
		\circ
		\left(\Id_S \tensor Enc^b_k\right)
		\circ
		G(1)
	\right]
	\leq
	\frac{1}{2} + \eta(\lambda)
\end{equation}
where $\lambda$ is implicit on the left-hand side, $\mc{K}_\lambda$ is distributed on $\{0,1\}^{\kappa(\lambda)}$ such that $\Pr[\mc{K} = k] = \Tr[\ketbra{k}  K(1)]$, $B_k$ is the CPTP map defined by $\rho \mapsto B(\ketbra{k} \tensor \rho)$, and similarly for $C_k$.
\end{definition}

It is trivial to see, but worth noting, that uncloneable-indistinguishable security implies indistinguishable security. We now briefly sketch the proof.

Let $\mc{S}$ be an QECM and $\mc{A} = \left(\texttt{G}, \texttt{A}\right)$ be a distinguishing attack which shows that $\mc{S}$ is not indistinguishable secure.
Then, we can construct a cloning-distinguishing attack $\mc{A}' = \left(\texttt{G}', \texttt{A}', \texttt{B}', \texttt{C}'\right)$ which implies that $\mc{S}$ is not uncloneable-indistinguishable secure.
Set $\texttt{G}' = \texttt{G}$ and $\texttt{B}$ and $\texttt{C}$ be the circuits which do nothing on a single qubit input.
Then, we define $\texttt{A}'$ to first run $\texttt{A}$ and measure the output in the computational basis state.
The result is a single classical bit which may then be copied and given to both $\texttt{B}$ and $\texttt{C}$.
We then observe that the winning probability of $\mc{A}$ in the indistinguishable scenario is the same as the winning probability of $\mc{A}'$ in the uncloneable-indistinguishable scenario.

Finally, it can also be shown that any $0$-uncloneable secure QECM $\mc{S}$ is uncloneable-indistinguishable secure.
The proof of \cref{theorem:0-UNC=>UNC-IND} can be found in \cref{section:proofs}.

\begin{theorem}
\label{theorem:0-UNC=>UNC-IND}
Let $\mc{S}$ be an QECM.
If $\mc{S}$ is $0$-uncloneable secure and has constant message size, i.e.: $n(\lambda) = n$, then it is also uncloneable-indistinguishable secure.
\end{theorem}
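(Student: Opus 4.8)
The plan is to argue the contrapositive: from a cloning-distinguishing attack that wins with probability $\frac12 + \epsilon$ for some non-negligible $\epsilon$, I will build a cloning attack against $\mc{S}$ whose success probability exceeds $2^{-n} + \negl$, contradicting $0$-uncloneable security. The hypothesis that the message size $n(\lambda) = n$ is constant will be used precisely at the end, to absorb a multiplicative factor of $2^{n}$ into a negligibility bound.

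Let $\mc{A}' = (\texttt{G}, \texttt{A}', \texttt{B}', \texttt{C}')$ be the cloning-distinguishing attack, and write its winning probability as $\frac12(W_0 + W_1)$, where $W_b = \E_{k \gets \mc{K}} \Tr[(\ketbra{b} \tensor \ketbra{b})(B'_k \tensor C'_k) \circ A' \circ (\Id_S \tensor Enc^b_k) \circ G(1)]$ is the contribution of the challenge bit $b$. I construct the cloning attack $\mc{A} = (\texttt{A}, \texttt{B}, \texttt{C})$ as follows. On the challenge ciphertext $Enc_k(m)$, for the uniformly random and unknown message $m$, the splitting map $\texttt{A}$ first runs $\texttt{G}$ to prepare registers $S$ and $M$, measures $M$ in the computational basis to obtain a classical guess $\mu$, and copies $\mu$ into both output branches; it then applies $A'$ to $S$ together with the challenge ciphertext to produce the registers $B$ and $C$. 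Given the key, $\texttt{B}$ runs $B'_k$ to obtain a bit $\beta$ and outputs $\mu$ if $\beta = 1$ and $0^n$ if $\beta = 0$; the circuit $\texttt{C}$ behaves symmetrically with $C'_k$, using the same copied $\mu$. All of these are efficient and the output types match \cref{definition:UNC-attack}.

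The key observation is that this attack recovers $m$ whenever the embedded distinguisher wins on a ciphertext whose plaintext happens to coincide with the branch it is guessing. Concretely, both players output $m$ at least in the two disjoint events: (i) $m = 0^n$ and $\beta = \gamma = 0$; and (ii) $m = \mu \neq 0^n$ and $\beta = \gamma = 1$. Because $m$ is uniform and independent of the run of $\texttt{G}$, conditioning on $m = 0^n$ feeds $A'$ exactly the state $(\Id_S \tensor Enc^0_k) \circ G(1)$ — measuring $M$ and then ignoring the outcome leaves the same reduced state on $S$ as tracing it out — so event (i) contributes $2^{-n} W_0$; and summing the event $m = \mu$ over all values reconstructs exactly the $b=1$ experiment, since the measurement of $M$ producing $\mu$ together with the ciphertext $Enc_k(\mu)$ and the correlated side register $S$ is precisely what $Enc^1_k$ does, so event (ii) contributes $2^{-n} W_1$. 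The degenerate runs with $\mu = 0^n$, where both branches encrypt the same plaintext and hence carry no distinguishing advantage, are dropped from this lower bound (they only help) and can be assumed absent by having $\texttt{G}$ avoid the plaintext $0^n$. Hence the recovery probability satisfies
\begin{equation}
	P_{\mathrm{rec}} \geq 2^{-n}(W_0 + W_1) = 2^{-n}\left(1 + 2\epsilon\right) = 2^{-n} + 2^{-n+1}\epsilon .
\end{equation}

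Applying $0$-uncloneable security to $\mc{A}$ gives $P_{\mathrm{rec}} \leq 2^{-n} + \eta(\lambda)$ for a negligible $\eta$, whence $2^{-n+1}\epsilon \leq \eta(\lambda)$, i.e. $\epsilon \leq 2^{n-1}\eta(\lambda)$. Since $n$ is constant, $2^{n-1}$ is a constant and $2^{n-1}\eta$ is still negligible, contradicting the assumption that $\epsilon$ is non-negligible. I expect the main obstacle to be the bookkeeping in the third step: one must verify that measuring $M$ rather than tracing it leaves the state fed to $A'$ faithful to the $b=0$ experiment on the event $m = 0^n$, that the correlation between $S$ and the measured $\mu$ makes the event $m = \mu$ reproduce the $b=1$ experiment exactly, and that the overlapping and degenerate cases ($m \notin \{0^n, \mu\}$, or $\mu = 0^n$) are correctly signed so that discarding them yields a genuine lower bound. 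The conceptual crux is simply that a joint distinguisher, run against a randomly planted plaintext, becomes a joint recovery attack that beats random guessing by a margin proportional to its advantage — a margin that $0$-uncloneable security forbids.
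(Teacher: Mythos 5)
Your construction and accounting are essentially identical to the paper's own proof: the same reduction (run $\texttt{G}$, measure $M$ to get $\mu$, feed $S$ together with the challenge ciphertext to $A'$, and have each player output $0^n$ or $\mu$ according to its bit), the same identification of the disjoint winning events with $2^{-n}W_0$ and $2^{-n}W_1$, and the same use of constant $n$ to absorb the factor $2^{n-1}$ into the negligible term. The one small divergence is your handling of the $\mu = 0^n$ runs: the paper simply counts the additional disjoint winning event ($m = 0^n$, $\mu = 0^n$, both bits equal to $1$), which recovers the full $W_1$ directly, whereas you rely on an asserted-but-unproven (though justifiable) normalization that $\texttt{G}$ never outputs $0^n$.
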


\section{Two Protocols} %
\label{sec:protocols}   %

In this section, we first present a protocol for the encryption of classical messages into quantum ciphertexts based on Wiesner's conjugate encoding~(\cref{section:conjugate-encryption}). This will also include a simple proof of its uncloneable security.
Then, in \cref{section:our-protocol}, we present a refinement of this first protocol which uses quantum secure pseudorandom functions.
The proof of the uncloneable security of this protocol is more involved and so we present some technical lemmas in \cref{section:technicallemmas} before we give our final main results in \cref{sec:maintheorem}.

\subsection{Conjugate Encryption}    %
\label{section:conjugate-encryption} %

Our first QECM scheme is a one-time pad encoded into Wiesner states.
We emphasize that this scheme will not offer much in terms of uncloneable security but it remains an instructive example.

\begin{definition}[Conjugate Encryption]
\label{protocol:CE}
We define the \emph{conjugate encryption} QECM scheme by the following circuits, each implicitly parametrized by $\lambda$.
Note that the message size is $n(\lambda) = \lambda$, the key size is $\kappa(\lambda) = 2\lambda$ and the ciphertext size is $\ell(\lambda) = \lambda$.

\newpage 

\begin{algorithm}[H]
\DontPrintSemicolon
\caption{The key generation circuit $\texttt{Key}$.}
	\SetKwInOut{Input}{Input}
	\SetKwInOut{Output}{Output}

	\Input{None.}
	\Output{A state $\rho \in \mc{D}(\mc{Q}(\kappa))$.}

	Sample $r \gets \{0,1\}^n$ uniformly at random.\;
	Sample $\theta \gets \{0,1\}^n$ uniformly at random.\;
	Output $\rho = \ketbra{r} \tensor \ketbra{\theta}$.\;
\end{algorithm}

\begin{algorithm}[H]
\DontPrintSemicolon
\caption{The encryption circuit $\texttt{Enc}$.}
	\SetKwInOut{Input}{Input}
	\SetKwInOut{Output}{Output}

	\Input{A plaintext $m \in \{0,1\}^n$ and a key $(r, \theta) \in \{0,1\}^\kappa$.}
	\Output{A ciphertext $\rho \in \mc{D}(\mc{Q}(n))$.}

	Output $\rho = \ketbra{(m \xor r)^\theta}$.\;
\end{algorithm}

\begin{algorithm}[H]
\DontPrintSemicolon
\caption{The decryption circuit $\texttt{Dec}$.}
	\SetKwInOut{Input}{Input}
	\SetKwInOut{Output}{Output}

	\Input{A ciphertext $\rho \in \mc{D}(\mc{Q}(n))$ and a key $(r, \theta) \in \{0,1\}^{\kappa}$.}
	\Output{A plaintext $m \in \{0,1\}^n$.}
	Compute $\rho' = H^\theta \rho H^\theta$.\;
	Measure $\rho'$ in the computational basis. Call the result $c$.
	Output $c \xor r$.
\end{algorithm}
\end{definition}

The correctness of this scheme is trivial to verify and it is indistinguishable secure.
The latter follows from the fact that if $Enc_{r, \theta}^0$ and $Enc_{r, \theta}^1$ are as defined in \cref{definition:IND-security}, then for any $\rho \in \mc{D}(\mc{H}_S \tensor \mc{Q}(n))$ we have that
\begin{equation}
	\E_r
	\E_\theta
	\left(\Id_S \tensor Enc_{(r,\theta)}^1\right)(\rho)
	=
	\E_r
	\E_\theta
	\left(\Id_S \tensor Enc_{(r, \theta)}^0\right)(\rho).
\end{equation}

We will need one small technical lemma before proceeding to the proof of uncloneable security for this scheme.
The proof can be found in \cref{section:proofs}.
\begin{lemma}
\label{theorem:string-permute}
Let $n \in \N^+$, $f : \{0,1\}^n \times \{0,1\}^n \to \R$ be a function and $s \in \{0,1\}^n$ be a string.
Then,
\begin{equation}
	\E_x f(x, x \xor s) = \E_x f(x \xor s, x).
\end{equation}
\end{lemma}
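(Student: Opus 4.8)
The plan is to prove this by a change of variables, exploiting the fact that bitwise XOR with a fixed string is a bijection on $\{0,1\}^n$. Recall from the preliminaries that, since no random variable is specified, $\E_x$ denotes the uniform average $\frac{1}{2^n}\sum_{x \in \{0,1\}^n}$. The claim is therefore equivalent, after clearing the common factor $\frac{1}{2^n}$, to the identity of finite sums
\begin{equation}
	\sum_{x \in \{0,1\}^n} f(x, x \xor s)
	=
	\sum_{x \in \{0,1\}^n} f(x \xor s, x).
\end{equation}

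First I would observe that the map $\sigma_s : \{0,1\}^n \to \{0,1\}^n$ defined by $\sigma_s(x) = x \xor s$ is a bijection; indeed, it is its own inverse, since $\sigma_s(\sigma_s(x)) = (x \xor s) \xor s = x$ using associativity of $\xor$ together with $s \xor s = 0^n$. Then I would substitute $y = \sigma_s(x) = x \xor s$ in the left-hand sum. As $x$ ranges over all of $\{0,1\}^n$, the image $y$ ranges over all of $\{0,1\}^n$ as well, and the relation inverts as $x = y \xor s$. Rewriting the summand accordingly gives $f(x, x \xor s) = f(y \xor s, y)$, so that
\begin{equation}
	\sum_{x \in \{0,1\}^n} f(x, x \xor s)
	=
	\sum_{y \in \{0,1\}^n} f(y \xor s, y).
\end{equation}
Renaming the dummy index $y$ back to $x$ on the right-hand side produces exactly the desired sum, and restoring the factor $\frac{1}{2^n}$ recovers the statement phrased in terms of $\E_x$.

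There is no genuine obstacle here: the single point requiring care is to confirm that $\sigma_s$ is a bijection of the index set, so that the reindexing merely permutes the collection of terms being summed and leaves the total unchanged; this is immediate from the involutivity noted above. I would carry out the argument at the level of the finite sum rather than manipulating $\E$ directly, since the change of variables is most transparent in that form.
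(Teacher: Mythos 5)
Your proof is correct and uses essentially the same idea as the paper's: both rely on the fact that $x \mapsto x \xor s$ is a self-inverse bijection of $\{0,1\}^n$, so that reindexing the uniform average leaves it unchanged (the paper phrases this via an auxiliary function $g(x) = f(x, x\xor s)$ and the identity $\E_x g(x) = \E_x g(x\xor s)$, while you carry out the change of variables explicitly at the level of the sum).
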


\begin{theorem}
The scheme in \cref{protocol:CE} is $\lambda\log_2\left(1 + \frac{1}{\sqrt{2}}\right)$-uncloneable secure.
\end{theorem}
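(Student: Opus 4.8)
The plan is to reduce the uncloneable security of conjugate encryption directly to the monogamy-of-entanglement bound of \cref{theorem:TFKW13-ours}. The key structural observation is that conjugate encryption is essentially just sending a random Wiesner state: when Alice samples a uniform message $m$ and a uniform key $(r, \theta)$, the ciphertext is $\ketbra{(m \xor r)^\theta}$. If I fix $\theta$ and let $m$ range uniformly, then $x = m \xor r$ also ranges uniformly over $\{0,1\}^n$ for each fixed $r$. So the cloning attack sees a uniformly random Wiesner state $\ketbra{x^\theta}$, splits it via the map $A$ into registers $B$ and $C$, and then each of Bob and Charlie must recover $m = x \xor r$ after learning the key $(r, \theta)$.

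First I would rewrite the winning probability by substituting $x = m \xor r$ and using \cref{theorem:string-permute} to swap the roles of the summation variables so that the expectation over $m$ becomes an expectation over $x$ with $m = x \xor r$. The goal is to show that Bob's final measurement, which takes the key $(r,\theta)$ and his register $B$ and outputs a guess for $m$, can be repackaged as a POVM $\{B_x^\theta\}_x$ that guesses $x$ directly: Bob outputs $m_B$, so define his success as outputting the correct $x$ by composing his decryption with the XOR by $r$. Concretely, for each fixed $\theta$, the operator that tests whether $B$ decrypts to the correct message, averaged appropriately over $r$, should assemble into a valid POVM $\{B_x^\theta\}_{x \in \{0,1\}^\lambda}$ indexed by the guessed Wiesner string, and similarly for Charlie with $\{C_x^\theta\}_x$. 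Since $r$ is part of the key given to both players and the message-correctness condition $m_B = m$ is equivalent to guessing $x = m \xor r$ correctly, the two formulations coincide.

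Once the attack is recast in this form, the left-hand side of the uncloneable security inequality becomes exactly
\begin{equation}
	\E_\theta
	\E_x
	\Tr\left[
		\left(B_x^\theta \tensor C_x^\theta\right)
		A\left(\ketbra{x^\theta}\right)
	\right],
\end{equation}
which is precisely the quantity bounded by $\left(\frac{1}{2} + \frac{1}{2\sqrt{2}}\right)^\lambda$ in \cref{theorem:TFKW13-ours}, taking $\Phi = A$. Rewriting this bound as $\left(\frac{1}{2} + \frac{1}{2\sqrt{2}}\right)^\lambda = 2^{-\lambda} \cdot \left(1 + \frac{1}{\sqrt{2}}\right)^\lambda = 2^{-\lambda + \lambda \log_2\left(1 + \frac{1}{\sqrt{2}}\right)}$, and recalling $n = \lambda$, gives the desired bound $2^{-n + t}$ with $t = \lambda \log_2\left(1 + \frac{1}{\sqrt{2}}\right)$ and negligible function $\eta = 0$.

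The main obstacle is the careful bookkeeping in the middle step: verifying that the composition of Bob's circuit $B_k$ (which takes the full key including $r$) with the final projection onto $\ketbra{m}$ genuinely reassembles into a legitimate POVM over the Wiesner-string index $x$, uniformly in $r$, so that \cref{theorem:TFKW13-ours} applies verbatim. The subtlety is that the POVM elements in \cref{theorem:TFKW13-ours} may not depend on $r$, whereas Bob's measurement does; the resolution is that $x = m \xor r$ is the quantity being guessed, and averaging over the uniform $m$ (equivalently over uniform $x$ for each fixed $r$) together with the fact that $r$ only shifts the index means one can absorb the $r$-dependence and still obtain, for each $\theta$, a single valid POVM. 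I would use \cref{theorem:string-permute} precisely to justify this change of variables cleanly, and I would verify that the resulting operators are positive and sum to the identity before invoking the corollary.
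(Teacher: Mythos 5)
Your proposal is correct and follows essentially the same route as the paper's proof: the change of variables via \cref{theorem:string-permute} to turn the ciphertext into a uniformly random Wiesner state $\ketbra{m^\theta}$, the repackaging of each party's key-dependent decryption-and-compare step as a POVM $\{B^\theta_x\}_x$ for each fixed $r$, and a direct application of \cref{theorem:TFKW13-ours} with $\Phi = A$ followed by solving $\left(\frac{1}{2}+\frac{1}{2\sqrt{2}}\right)^\lambda = 2^{-n+t}$. The $r$-dependence subtlety you flag is resolved exactly as you suggest (and as the paper does implicitly): the corollary is applied separately for each fixed $r$, and since the resulting bound is uniform in $r$ the outer expectation over $r$ is harmless.
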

\begin{proof}
It suffices to show that for any cloning attack $\mc{A}$ the quantity
\begin{equation}
	\E_m
	\E_r
	\E_\theta
	\Tr\left[
		\left(\ketbra{m} \tensor \ketbra{m}\right)
		\left(B_{(r, \theta)} \tensor C_{(r, \theta)}\right)
		\circ
		A
		\left(\ketbra{(m \xor r)^\theta}\right)
	\right]
\end{equation}
is upper bounded by $\left(\frac{1}{2} + \frac{1}{2\sqrt{2}}\right)^\lambda$.
By applying \cref{theorem:string-permute} with respect to the expectation over $m$, this quantity is the same as
\begin{equation}
	\E_m
	\E_r
	\E_\theta
	\Tr\left[
		\left(\ketbra{m \xor r} \tensor \ketbra{m \xor r}\right)
		\left(B_{(r, \theta)} \tensor C_{(r, \theta)}\right)
		\circ
		A
		\left(\ketbra{m^\theta}\right)
	\right].
\end{equation}
We then see that for any fixed $r$, we can apply \cref{theorem:TFKW13-ours} to bound the expectation of the trace over $m$ and $\theta$ by $\left(\frac{1}{2} + \frac{1}{2\sqrt{2}}\right)^\lambda$.
Setting this quantity to be equal to $2^{-n+t}$, recalling that $n = \lambda$, and solving for $t$ completes the proof.
\end{proof}

Finally, note that this scheme cannot be uncloneable-indistinguishable secure if $n \geq 2$. Indeed, the adversaries could submit the all $1$ plaintext to be encrypted and split the ciphertext such that each adversary gets half of the qubits.
Once the key is revealed, the adversaries can then each obtain half of the message with probability $1$.
This is sufficient to distinguish between the two possible messages.

\subsection{Our Protocol}    %
\label{section:our-protocol} %

As discussed in \cref{sec:SummaryContributions}, the motivation for our second QECM scheme is to use quantum-secure pseudorandom functions to attempt to ``distill'' the uncloneability found in the Wiesner state.

\begin{definition}[$\mc{F}$-Conjugate Encryption]
\label{protocol:FCE}%
\label{scheme:PRF-QECM}%
\label{definition:our-protocol}%
For a function $n : \N^+ \to \N^+$, let
\begin{equation}
	\mc{F}
	=
	\left\{
		f_\lambda : \{0,1\}^\lambda \times \{0,1\}^\lambda \to \{0,1\}^{n(\lambda)}
	\right\}_{\lambda \in \N^+}
\end{equation}
be a quantum-secure pseudorandom function.
We define the $\mc{F}$-conjugate encryption QECM scheme by the following circuits which are implicitly parametrized by $\lambda$.
Note that the message size is the output size of the qPRF, $n(\lambda)$, the key size is $\kappa(\lambda) = 2\lambda$, and the ciphertext size is $\ell(\lambda) = \lambda + n(\lambda)$.

\begin{algorithm}[H]
\DontPrintSemicolon
\caption{The key generation circuit $\texttt{Key}$.}
	\SetKwInOut{Input}{Input}
	\SetKwInOut{Output}{Output}

	\Input{None.}
	\Output{A state $\rho \in \mc{D}(\mc{Q}(\kappa(\lambda)))$.}

	Sample $s \gets \{0,1\}^\lambda$ uniformly at random.\;
	Sample $\theta \gets \{0,1\}^\lambda$ uniformly at random.\;
	Output $\rho = \ketbra{s} \tensor \ketbra{\theta}$.\;
\end{algorithm}

\begin{algorithm}[H]
\DontPrintSemicolon
\caption{The encryption circuit $\texttt{Enc}$.}
	\SetKwInOut{Input}{Input}
	\SetKwInOut{Output}{Output}

	\Input{A plaintext $m \in \{0,1\}^n$ and a key $(s, \theta) \in \{0,1\}^{\kappa}$.}
	\Output{A ciphertext $\rho \in \mc{D}(\mc{Q}(\ell(\lambda)))$.}

	Sample $x \gets \{0,1\}^\lambda$ uniformly at random.\;
	Compute $c = m \xor f_\lambda(s, x)$.\;
	Output $\rho = \ketbra{c} \tensor \ketbra{x^\theta}$.\;
\end{algorithm}

\newpage 

\begin{algorithm}[H]
\DontPrintSemicolon
\caption{The decryption circuit $\texttt{Dec}$.}
	\SetKwInOut{Input}{Input}
	\SetKwInOut{Output}{Output}

	\Input{A ciphertext $\ketbra{c} \tensor \rho \in \mc{D}(\Qb(\ell))$ and a key $(s, \theta) \in \{0,1\}^{\kappa}$.}
	\Output{A plaintext $m \in \{0,1\}^n$.}

	Compute $\rho' = H^\theta \rho H^\theta$.\;
	Measure $\rho'$ in the computational basis. Call the result $r$.\;
	Output $m = c \xor f_\lambda(s, r)$.\;
\end{algorithm}
\end{definition}

It is trivial to see that this scheme is correct and we can also show that it is indistinguishable secure.
The latter follows from the fact that if we use a truly random function instead of the qPRF then, for any state $\rho$ we have
\begin{equation}
	\E_r
	\E_{H \in \Bool(\lambda, n)}
	\left(\Id_S \tensor Enc_{(r, H)}^0\right)(\rho)
	=
	\E_r
	\E_{H \in \Bool(\lambda, n)}
	\left(\Id_S \tensor Enc_{(r,H)}^1\right)(\rho)
\end{equation}
where $Enc_{(r, H)}^0$ and $Enc_{(r, H)}^1$ are as given in \cref{definition:IND-security} except that they use a truly random function $H$ instead of a keyed qPRF.
Thus, any adversary has no advantage in distinguishing the cases.
When the truly random functions are replaced by a qPRF, the adversary may have at most a negligible advantage in distinguishing these two cases.

\subsection{Technical Lemmas}   %
\label{sec:technical lemmas}    %
\label{section:technicallemmas} %

We first present a few technical lemmas which will be used in our proof of security.
The proof of \cref{theorem:algebra,theorem:vector-norm} appear in \cref{section:proofs}.

\begin{lemma}
\label{theorem:algebra}
Let $R$ be a ring with $a, b \in R$ and $c = a + b$.
Then, for all $n \in \N^+$, we have that
\begin{equation}
	c^n = a^n + \sum_{k = 0}^{n-1} a^{n-k-1} b c^{k}.
\end{equation}
\end{lemma}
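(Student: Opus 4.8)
The plan is to prove the identity by a telescoping argument, which has the advantage of working in an arbitrary (possibly non-commutative) ring without any need to commute $a$ and $c$. The starting observation is that since $c = a + b$, we have $b = c - a$, so each summand can be rewritten as
\begin{equation}
	a^{n-k-1} b c^k = a^{n-k-1}(c - a) c^k = a^{n-k-1} c^{k+1} - a^{n-k} c^k.
\end{equation}
The crucial point is that this expansion keeps every power of $a$ on the left and every power of $c$ on the right, so it is valid even when $a$ and $c$ do not commute.

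With this rewriting, the sum telescopes. Setting $T_k = a^{n-k} c^k$, the $k$-th summand is exactly $T_{k+1} - T_k$, and hence
\begin{equation}
	\sum_{k=0}^{n-1} a^{n-k-1} b c^k = \sum_{k=0}^{n-1} \left(T_{k+1} - T_k\right) = T_n - T_0 = c^n - a^n,
\end{equation}
using $T_n = a^0 c^n = c^n$ and $T_0 = a^n c^0 = a^n$. Rearranging yields the claimed identity $c^n = a^n + \sum_{k=0}^{n-1} a^{n-k-1} b c^{k}$.

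I do not anticipate any genuine obstacle; the only point requiring care is to avoid tacitly invoking commutativity, which the telescoping form above sidesteps by construction. As an alternative one could argue by induction on $n$: the base case $n = 1$ reduces to $c = a + b$, and the inductive step follows by multiplying the hypothesis on the right by $c$, expanding $a^n c = a^{n+1} + a^n b$, and re-indexing the resulting sum. I would present the telescoping proof as the main argument, since it is shorter and makes the independence from commutativity transparent.
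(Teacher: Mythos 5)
Your telescoping argument is correct: writing $b = c - a$ and observing that $a^{n-k-1} b c^k = a^{n-k-1}c^{k+1} - a^{n-k}c^k$ does make the sum collapse to $c^n - a^n$, and the bookkeeping (powers of $a$ on the left, powers of $c$ on the right) indeed avoids any appeal to commutativity. The paper instead proves the lemma by induction on $n$, with the inductive step being exactly the computation you sketch as your alternative: multiply the hypothesis on the right by $c$, expand $a^n c = a^{n+1} + a^n b$, and re-index. The two arguments are essentially dual --- a telescoping sum is an unrolled induction --- but your primary presentation is slightly more self-contained, since it exhibits the identity in one line rather than deferring the re-indexing to an inductive step, and it makes the non-commutative validity visibly structural rather than something to be checked along the way. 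Either version is acceptable; there is no gap.
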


\begin{lemma}
\label{theorem:vector-norm}
Let $\HS$ be a Hilbert space, $n \in \N^+$, and $\{v_0, v_1, \ldots, v_n\}$ be $n + 1$ vectors in $\HS$ such that $\norm{v_i} \leq 1$ for all $i \in \{1, \ldots, n\}$ and $\norm{\sum_{i = 0}^n v_i} \leq 1$.
Then,
\begin{equation}
	\norm{\sum_{i = 0}^{n} v_i}^2 \leq \norm{v_0}^2 + (3n + 2)\sum_{i = 1}^{n} \norm{v_i}.
\end{equation}
\end{lemma}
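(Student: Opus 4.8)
The plan is to give a direct computation: expand the squared norm of the full sum and control each resulting piece using only the two hypotheses. Write $W = \sum_{i=1}^n v_i$, so that $\sum_{i=0}^n v_i = v_0 + W$. The key preliminary observation is that, although no bound on $\norm{v_0}$ is assumed directly, one follows from the hypotheses: by the triangle inequality together with $\norm{v_i} \leq 1$ we have $\norm{W} \leq \sum_{i=1}^n \norm{v_i} \leq n$, and then, since $v_0 = \big(\sum_{i=0}^n v_i\big) - W$, a second application of the triangle inequality using $\norm{\sum_{i=0}^n v_i} \leq 1$ gives $\norm{v_0} \leq 1 + n$.

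Next I would expand
\[
	\norm{v_0 + W}^2 = \norm{v_0}^2 + 2\,\mathrm{Re}\langle v_0, W\rangle + \norm{W}^2
\]
and bound the last two terms separately. For the cross term, Cauchy--Schwarz gives $2\,\mathrm{Re}\langle v_0, W\rangle \leq 2\norm{v_0}\,\norm{W}$; inserting $\norm{v_0} \leq n+1$ and $\norm{W} \leq \sum_{i=1}^n \norm{v_i}$ yields the bound $(2n+2)\sum_{i=1}^n \norm{v_i}$. For the remaining term, $\norm{W}^2 \leq \big(\sum_{i=1}^n \norm{v_i}\big)^2$, and here the crucial step is to bound one of the two factors by $\sum_{i=1}^n \norm{v_i} \leq n$, which linearizes it to $\norm{W}^2 \leq n\sum_{i=1}^n \norm{v_i}$. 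Adding the three contributions produces exactly $\norm{v_0}^2 + (3n+2)\sum_{i=1}^n \norm{v_i}$, as claimed.

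The only real subtlety, and what dictates the shape of the argument, is that the right-hand side is linear in the norms $\norm{v_i}$ rather than quadratic, so one cannot simply apply Cauchy--Schwarz and retain squared norms. The hypothesis $\norm{v_i} \leq 1$ is precisely what lets one demote every stray factor $\norm{v_i}$, as well as the square $\big(\sum_i \norm{v_i}\big)^2$, to a linear expression, while the hypothesis $\norm{\sum_{i=0}^n v_i} \leq 1$ is exactly what is needed to tame the a priori unbounded quantity $\norm{v_0}$. I expect no genuine obstacle beyond bookkeeping the constants so that the cross term contributes $2n+2$ and the $\norm{W}^2$ term contributes $n$, summing to the stated $3n+2$.
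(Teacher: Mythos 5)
Your proof is correct and follows essentially the same route as the paper's: both hinge on first deriving $\norm{v_0} \leq n+1$ from the two hypotheses and then linearizing every remaining term via $\norm{v_i} \leq 1$, arriving at the constant $3n+2$. The only cosmetic difference is that you expand $\norm{v_0 + W}^2$ exactly and apply Cauchy--Schwarz to the cross term (splitting the constant as $(2n+2)+n$), whereas the paper first applies the triangle inequality to get $\bigl(\sum_i \norm{v_i}\bigr)^2$ and bounds the resulting double sum (splitting it as $(n+1)+(2n+1)$).
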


The following implicitly appears in \cite{Unr15b}.
\begin{lemma}
\label{theorem:Hxy}
Let $f : \Bool(n,m) \to \R$ be a function and $x \in \{0,1\}^n$ be a string.
For any $H \in \Bool(n,m)$ and $y \in \{0,1\}^m$, define $H_{x,y} \in \Bool(n,m)$ by
\begin{equation}
	s \mapsto
	\begin{cases}
		H(s)	& \text{ if } s \not= x,	\\
		y		& \text{ if } s = x.
	\end{cases}
\end{equation}
Then,
\begin{equation}
	\E_H f(H) = \E_H \E_y f(H_{x,y}).
\end{equation}
\end{lemma}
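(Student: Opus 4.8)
The plan is to expand both expectations as normalized sums over the relevant finite sets and then to match the two sides by a simple reindexing. Writing $N = 2^m$ for the size of the codomain, every $H \in \Bool(n,m)$ is specified by the tuple of its values $(H(s))_{s \in \{0,1\}^n}$, so $\abs{\Bool(n,m)} = N^{2^n}$ and the left-hand side is $\frac{1}{N^{2^n}} \sum_{H} f(H)$, whereas the right-hand side is $\frac{1}{N^{2^n}} \cdot \frac{1}{N} \sum_H \sum_y f(H_{x,y})$, the extra $\frac{1}{N}$ coming from the uniform average over $y \in \{0,1\}^m$.

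The key observation is that $H_{x,y}$ depends on $H$ only through its restriction to $\{0,1\}^n \setminus \{x\}$, since the value at $x$ is overwritten by $y$. I would therefore decompose each $H$ into the pair consisting of $H(x)$ and $g = H|_{\{0,1\}^n \setminus \{x\}}$, and carry out the sum over these two pieces separately. Because the summand $f(H_{x,y})$ does not depend on the discarded coordinate $H(x)$, summing over its $N$ possible values merely multiplies by $N$, which exactly cancels the $\frac{1}{N}$ from the average over $y$.

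What then remains is the remark that a function $H' \in \Bool(n,m)$ is uniquely determined by the pair $(y, g)$, where $y \in \{0,1\}^m$ records its value at $x$ and $g$ records its values on $\{0,1\}^n \setminus \{x\}$, and that $H_{x,y}$ is precisely the function assembled from this data. Hence $\sum_g \sum_y f(H_{x,y}) = \sum_{H'} f(H')$, and collecting the normalizations yields $\E_H \E_y f(H_{x,y}) = \E_H f(H)$. Conceptually the statement is nothing more than the fact that resampling a single coordinate of a uniformly random function leaves it uniformly random; the only point requiring care is the bookkeeping of the three sums (over $H(x)$, over $g$, and over $y$) so that the cardinalities cancel correctly, and this is the step I would write out explicitly.
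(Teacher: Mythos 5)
Your proof is correct and complete: the decomposition of $H$ into $(H(x), H|_{\{0,1\}^n\setminus\{x\}})$, the cancellation of the sum over the discarded value $H(x)$ against the average over $y$, and the bijection $(g,y)\mapsto H_{x,y}$ are exactly the right bookkeeping, and this is the standard resampling argument the lemma is meant to encapsulate. The paper itself states this lemma without proof (attributing it implicitly to Unruh), so your write-up in fact supplies the omitted argument rather than deviating from one.
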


The following two lemmas form the core of the upcoming proofs of uncloneable security and they may be interpreted as follows.
We consider two adversaries who have oracle access to a function $H \in \Bool(\lambda, n)$  which is chosen uniformly at random.
Their goal is to simultaneously guess the value $H(x)$ for some value of $x$.
The adversaries share some quantum state which we interpret as representing all the information they may initially have on $x$.
The lemmas relate the probability of both parties simultaneously guessing $H(x)$ to their probability of being able to both simultaneously guess $x$.

The first of these lemmas, \cref{theorem:SL-NE}, considers this problem in a setting where the adversaries do not share any entanglement.
The second, \cref{theorem:SL} imposes no such restriction.

We show that the probability that both adversaries correctly guess $H(x)$ is upper bounded by
\begin{equation}
	\frac{1}{2^n} + Q \cdot G
	\hspace{1cm}
	\text{or}
	\hspace{1cm}
	9 \cdot \frac{1}{2^n} + Q' \cdot G'
\end{equation}
where $Q$ and $Q'$ are polynomial functions of the number of queries the adversaries make to the oracle and $G$ and $G'$ quantify their probability of guessing $x$ with a particular strategy.
The factor of $9$ is present only if we allow the adversaries to share entanglement.

We can interpret $G$ and $G'$ in a manner very similar to its analogous quantity in Unruh's one-way-to-hiding lemma \cite{Unr15b}.
The adversaries, instead of continuing until the end of their computation, will stop immediately before a certain (randomly chosen) query to the oracle and measure their query register in the computational basis.
Then, $G$ is related to the probability that this procedure succeeds at letting both adversaries simultaneously obtain $x$, averaged over the possible stopping points and possible functions implemented by the oracle.

The key idea in the proof of these lemmas is that we can decompose the unitary operator representing each of the adversaries' computations into two ``parts''.
Explicitly, this decomposition appears in \cref{equation:decomp1,equation:decomp2}.
One of these ``parts'' will never query the oracle on $x$ and the other could query the oracle on $x$.
We note that this idea was present in the proof of Unruh's one-way-to-hiding lemma \cite{Unr15b}. 

Recall from \cref{section:oracles} that we model queries to an oracle implementing a function $H$ as a unitary operator $O^H$ acting on a query and a response register with Hilbert spaces $\mc{H}_Q$ and $\mc{H}_R$ respectively. This unitary is defined on the computational basis states by $\ket{x}_Q \tensor \ket{y}_R \mapsto \ket{x}_Q \tensor \ket{y \xor H(x)}_R$.
A party having access to an oracle may also have some other register with Hilbert space $\mc{H}_S$ with which they perform other computations.
In general, their computation can then be modeled by an operator of the form $\left(U O^H\right)^q$ where $U$ is a unitary operator on $\mc{H}_Q \tensor \mc{H}_R \tensor \mc{H}_S$ and $q$ is the number of queries made to the oracle \cite{BBC+01,BDF+11,Unr15b}.

\begin{lemma}
\label{theorem:SL-NE}
Let $\lambda, n \in \N^+$.
For $L \in \{B, C\}$, we let
\begin{itemize}
	\item
		$s_L, q_L \in \N^+$,
	\item
		$\HS_{L_Q} = \Qb(\lambda)$, $\HS_{L_R} = \Qb(n)$, and $\HS_{L_S} = \Qb(s_L)$,
	\item
		$U_L \in \mc{U}(\HS_{L_Q} \tensor \HS_{L_R} \tensor \HS_{L_S})$, and
	\item
		$\{\pi^y_L\}_{y \in \{0,1\}^n}$ be a projective measurement on $\HS_{L_Q} \tensor \HS_{L_R} \tensor \HS_{L_S}$.
\end{itemize}
Finally, let $\ket{\psi} = \ket{\psi_B} \tensor \ket{\psi_C}$ be a separable unit vector with $\ket{\psi_L} \in \mc{Q}(n + \lambda + s_L)$ for $L \in \{B, C\}$ and $x \in \{0,1\}^\lambda$.
Then, we have
	\begin{equation}
	\E_H
	\norm{
		\Pi^{H(x)}
		\left(
			\left(U_B O^H_B\right)^{q_B}
			\tensor
			\left(U_C O^H_C\right)^{q_C}
		\right)
		\ket{\psi}
	}^2
	\leq
	\frac{1}{2^n}
	+
	(3q + 2)q
	\sqrt[4]{M}
\end{equation}
where $\Pi^{H(x)} = \pi_B^{H(x)} \tensor \pi_C^{H(x)}$, $q = q_B + q_C$ and
\begin{equation}
	M
	=
	\E_k \E_\ell \E_H \E_{H'}
	\norm{
		\left(\ketbra{x}_{B_Q} \tensor \ketbra{x}_{C_Q}\right)
		\left(
			\left(U_B O_B^H\right)^k
			\tensor
			\left(U_C O_C^{H'}\right)^\ell
		\right)
		\ket{\psi}
	}^2
\end{equation}
with $k \in \{0, \ldots, q_B-1\}$, $\ell \in \{0, \ldots, q_C-1\}$, and $H \in \Bool(\lambda,n)$.
\end{lemma}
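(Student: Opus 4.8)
The plan is to import Unruh's one-way-to-hiding decomposition \cite{Unr15b} into the two-party setting and combine it with \cref{theorem:vector-norm}. The guiding idea is to split each adversary's oracle algorithm into a part that is blind to the value $H(x)$ and a part that genuinely queries $x$: the blind part will account for the $2^{-n}$ guessing term, while the querying part is charged to $M$. Separability of $\ket\psi$ is what should let me obtain the clean constant $2^{-n}$ rather than the constant $9 \cdot 2^{-n}$ appearing in the entangled \cref{theorem:SL}.

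First I would fix $L \in \{B, C\}$, write $W_L = (U_L O^H_L)^{q_L}$, and let $P_L = \ketbra{x}_{L_Q} \tensor \Id$ and $Q_L = \Id - P_L$ project the query register onto $x$ and its complement. Splitting each query as $U_L O^H_L = U_L O^H_L Q_L + U_L O^H_L P_L$ and applying \cref{theorem:algebra} with $a = U_L O^H_L Q_L$, $b = U_L O^H_L P_L$, and $c = U_L O^H_L$ gives a decomposition $W_L = W_L^\perp + \sum_{j=0}^{q_L-1} R_L^{(j)}$. Here $W_L^\perp = (U_L O^H_L Q_L)^{q_L}$ consults the oracle only on inputs other than $x$, hence is independent of $H(x)$, and every $R_L^{(j)}$ carries a factor $P_L$ right after the $j$-th step, so that $\norm{R_L^{(j)} \ket{\psi_L}} \leq \norm{\ketbra{x}_{L_Q} (U_L O^H_L)^j \ket{\psi_L}}$ — exactly the query-at-$x$ amplitude summed inside $M$. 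I would use this projector form of the split (rather than an $O^H_L - \Id$ difference) precisely so that every operator in sight has norm at most $1$, which is the hypothesis needed to invoke \cref{theorem:vector-norm}.

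Next, since $\ket\psi = \ket{\psi_B} \tensor \ket{\psi_C}$ and $\Pi^{H(x)} = \pi_B^{H(x)} \tensor \pi_C^{H(x)}$ are both product operators, the left-hand side factors through the two parties, and I would group $W_B \tensor W_C$ into a \emph{main} part collecting every term in which at least one party does not query $x$, namely $W_B \tensor W_C^\perp + W_B^\perp \tensor W_C - W_B^\perp \tensor W_C^\perp$, and an \emph{error} part $(W_B - W_B^\perp) \tensor (W_C - W_C^\perp)$ in which both query. Feeding this into \cref{theorem:vector-norm}, the main part lands in the squared slot: each of its summands has a tensor factor independent of $H(x)$, so I can resample $H(x) = y$ by \cref{theorem:Hxy}, use the orthogonality relation $\sum_y \pi_B^y \tensor \pi_C^y \leq \Id$, and exploit separability, arriving at a $2^{-n}$ bound for the averaged squared norm of the projected main part. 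The error part supplies the genuine query vectors: bounding their norms by products of the two parties' query-at-$x$ amplitudes and averaging over $H$ with Cauchy--Schwarz — using that separability makes $M$ the product of the parties' mean-square query amplitudes — should produce a contribution of order $(3q+2)\, q \sqrt[4]{M}$.

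The hard part will be showing that the projected main part averages to at most $2^{-n}$ rather than to a polynomially larger multiple of $2^{-n}$. The main part is a sum of three terms, and bounding it by a triangle inequality — as one is forced to do without separability — costs a factor of $3$ on each of three $2^{-n}$ estimates and yields only $9 \cdot 2^{-n}$, which is precisely the constant in the entangled \cref{theorem:SL}. To recover the sharp $2^{-n}$ I must instead keep the product structure intact: in each mixed term the non-querying party's projector is independent of $H(x)$, so after resampling $y = H(x)$ it can be summed against the other party's projector using $\sum_y \pi^y \leq \Id$ \emph{before} the factors are separated, turning what would be an additive error into a true $2^{-n}$ suppression (note that when one party never queries $x$ we have $M = 0$, so the bound must collapse to exactly $2^{-n}$, leaving no room for such additive slack). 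Getting this regrouping exactly right — deciding which pieces are squared in the main slot of \cref{theorem:vector-norm} and which are charged to $M$ in the error slot, while checking the norm-at-most-$1$ hypotheses at each step — is the delicate core that distinguishes the separable bound from a mechanical doubling of the single-party argument.
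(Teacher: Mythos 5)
There is a genuine gap, and it sits exactly where you flag ``the hard part.'' Your plan decomposes $W_B \tensor W_C$ into a three-term main part $W_B \tensor W_C^\perp + W_B^\perp \tensor W_C - W_B^\perp \tensor W_C^\perp$ plus an error part, and then hopes to show that the projected main part averages to $2^{-n}$. But the tools you propose for that step --- resampling $y = H(x)$, the orthogonality $\sum_y \pi^y \leq \Id$, and separability --- cannot beat $9 \cdot 2^{-n}$ here. Writing the main part as $A + B$ with $A = W_B \tensor W_C^\perp$ and $B = W_B^\perp \tensor (W_C - W_C^\perp)$, the diagonal contributions are $\E_y \norm{\Pi^y A \ket{\psi}}^2 \leq 2^{-n}$ and $\E_y \norm{\Pi^y B \ket{\psi}}^2 \leq 4 \cdot 2^{-n}$ (the $4$ from $\norm{W_C - W_C^\perp} \leq 2$), and the cross term, even handled with Cauchy--Schwarz over $y$ while ``keeping the product structure intact,'' is bounded only by $2\sqrt{2^{-n} \cdot 4 \cdot 2^{-n}} = 4 \cdot 2^{-n}$, for a total of $9 \cdot 2^{-n}$ --- precisely the constant of the entangled \cref{theorem:SL}. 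There is no cancellation in the cross term that your sketch identifies, so your route reproduces the entangled bound rather than the separable one. Your proposal also never explains where the fourth root $\sqrt[4]{M}$ in the statement comes from; your error term $(W_B - W_B^\perp) \tensor (W_C - W_C^\perp)$ would naturally be charged to $\sqrt{M}$, which is a symptom that you are proving a different inequality.

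The paper's proof avoids the multi-term main part entirely, and this is the idea you are missing: separability lets you \emph{discard one party altogether}. Since $\ket{\psi} = \ket{\psi_B} \tensor \ket{\psi_C}$, the quantity $M$ factors as $M = M_B \cdot M_C$ where $M_L$ is party $L$'s mean-square query-at-$x$ amplitude; choosing $L$ with $M_L = \min\{M_B, M_C\}$ gives $\sqrt{M_L} \leq \sqrt[4]{M}$ (this is the source of the fourth root). Also by separability, the two-party projected norm is at most the single-party one, $\E_H \norm{\pi_L^{H(x)} (U_L O_L^H)^{q_L} \ket{\psi_L}}^2$. One then runs the ordinary single-party one-way-to-hiding decomposition $V_L^H + W_L^H$ on that party alone: \cref{theorem:vector-norm} puts the single operator $V_L^H$ in the squared slot, whose average is exactly $2^{-n}$ by \cref{theorem:Hxy} and orthogonality of the $\pi_L^y$, and the $W$ terms are charged to $(3q_L+2)q_L\sqrt{M_L} \leq (3q+2)q\sqrt[4]{M}$. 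Because the main part is a \emph{single} term, the clean constant $1$ comes for free; no regrouping of a three-term sum is ever needed.
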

\begin{proof}
Note that since $\ket{\psi}$ is separable, we have that
\begin{equation}
	M
	=
	\underbrace{
	\left(
		\E_H
		\E_k
		\norm{
			\ketbra{x}_{B_Q}
			\mc{O}^{U_B, H, k}
			\ket{\psi_B}
		}^2
	\right)
	}_{= M_B}
	\cdot
	\underbrace{
	\left(
		\E_{H'}
		\E_\ell
		\norm{
			\ketbra{x}_{C_Q}
			\mc{O}^{U_C, H', \ell}
			\ket{\psi_C}
		}^2
	\right)
	}_{= M_C}
	.
\end{equation}
For the remainder of the proof, we fix $L \in \{B, C\}$ such that $M_L = \min\{M_B, M_C\}$.
Note that $\sqrt{M_L} \leq \sqrt[4]{M}$.
Once again using the fact that $\ket{\psi}$ is separable, we have that
\begin{equation}
	\E_H
	\norm{
		\Pi^{H(x)}
		\left(
			\left(U_B O^H_B\right)^{q_B}
			\tensor
			\left(U_C O^H_C\right)^{q_C}
		\right)
		\ket{\psi}
	}^2
	\leq
	\E_H
	\norm{
		\pi_L^{H(x)}
		\left(U_B O^H_B\right)^{q_L}
		\ket{\psi_L}
	}^2
	.
\end{equation}
With all this, it suffices to show that
\begin{equation}
	\E_H
	\norm{
		\pi_L^{H(x)}
		\left(U_B O^H_B\right)^{q_L}
		\ket{\psi_L}
	}^2
	\leq
	\frac{1}{2^n} + (3q_L + 2)q_L\sqrt{M_L}
\end{equation}
to obtain our result.

Let $P_L = \ketbra{x}_{L_Q}$.
Using the fact that $U_L O_L^H = U_L O_L^H P_L + U_L O_L^H (\Id - P_L)$ and \cref{theorem:algebra}, we have that
\begin{equation}
\label{equation:decomp1}
\begin{aligned}
	\left(U_L O^H_L\right)^{q_L}
	=&
	\overbrace{
		\left(
			U_L
			O^H_L
			\left(\Id - P_L\right)
		\right)^{q_L}
	}^{= V_L^H}
	+
	\\
	&
	\sum_{k = 0}^{q_L - 1}
	\underbrace{
		\left(
			U_L
			O^H_L
			\left(\Id - P_L\right)
		\right)^{q_L - k - 1}
		U_L
		O^H_L
		P_L
		\left(
			U_L
			O^H_L
		\right)^k
	}_{= W_L^{H,k}}
\end{aligned}
\end{equation}
and we define $W^H_L = \sum_{k = 0}^{q_L - 1} W_L^{H, k}$ so that $\left(U_L O_L^H\right)^{q_L} = V_L^H + W_L^H$.

Using \cref{theorem:vector-norm}, the definition of the various $W$ operators, and properties of the operator norm on projectors and unitary operators, we have that
\begin{equation}
\begin{aligned}
	\norm{
		\pi^{H(x)}_L
		\left(
			V_L^H
			+
			W_L^H
		\right)
		\ket{\psi_L}
	}^2
	\leq
	\norm{
		\pi^{H(x)}_L
		V_L
		\ket{\psi_L}
	}^2
	+
	(3 q_L + 2) q_L
	\E_k
	\norm{
		P_L
		\left(U_L O_L^H\right)^k
		\ket{\psi_L}
	}
	.
\end{aligned}
\end{equation}
Using Jensen's inequality, the above inequality, and the definition of $M_L$, we have that
\begin{equation}
	\E_H
	\norm{
		\pi_L^{H(x)}
		\left(\left(U_L O_L^H\right)^{q_L}\right)
		\ket{\psi_L}
	}^2
	\leq
	\E_H
	\norm{
		\pi_L^{H(x)}
		V_L
		\ket{\psi_L}
	}^2
	+
	(3q_L + 2) q_L \sqrt{M_L}
\end{equation}
and so it suffices to show that $\E_H\norm{\pi_L^{H(x)} V_L^H \ket{\psi_L}}^2 \leq 2^{-n}$.
By \cref{theorem:Hxy}, it is then sufficient to show that
\begin{equation}
	\E_H \E_y \norm{\pi_L^y V_L^{H_{x,y}} \ket{\psi_L}}^2 \leq 2^{-n}
\end{equation}
where $H_{x, y} \in \Bool(\lambda, n)$ is defined by $H_{x,y}(x) = y$ and $H_{x,y}(s) = H(s)$ for all $s \not= x$.
Recall that $V_L^H$ is independent of the value of $H(x)$, in the sense that $V_L^{H_{x,y}} = V_L^H$ for all $y \in \{0,1\}^n$.
Indeed, prior to every query to $H$ in $V_L^H$, we project the state on a subspace which does not query $H$ on $x$.
So, using the fact that each $\pi_L^y$ projects on mutually orthogonal subspaces and that $\norm{V_L^H} \leq 1$, we have that
\begin{equation}
	\E_y\norm{\pi_L^y V_L^{H_{x,y}} \ket{\psi_L}}^2
	=
	\frac{1}{2^n}
	\norm{V_L^H \ket{\psi_L}}^2
	\leq
	\frac{1}{2^n}
\end{equation}
which completes the proof.
\end{proof}

\begin{lemma}
\label{theorem:SL}
Let $\lambda, n \in \N^+$.
For $L \in \{B, C\}$, we let
\begin{itemize}
	\item
		$s_L, q_L \in \N^+$,
	\item
		$\HS_{L_Q} = \Qb(\lambda)$, $\HS_{L_R} = \Qb(n)$, and $\HS_{L_S} = \Qb(s_L)$,
	\item
		$U_L \in \mc{U}(\HS_{L_Q} \tensor \HS_{L_R} \tensor \HS_{L_S})$, and
	\item
		$\{\pi^y_L\}_{y \in \{0,1\}^n}$ be a projective measurement on $\HS_{L_Q} \tensor \HS_{L_R} \tensor \HS_{L_S}$.
\end{itemize}
Finally, let $\ket{\psi} \in \mc{Q}(2(\lambda + n) + s_B + s_C)$ be a unit vector and $x \in \{0,1\}^\lambda$.
Then, we have
	\begin{equation}
	\E_H
	\norm{
		\Pi^{H(x)}
		\left(
			\left(U_B O^H_B\right)^{q_B}
			\tensor
			\left(U_C O^H_C\right)^{q_C}
		\right)
		\ket{\psi}
	}^2
	\leq
	\frac{9}{2^n}
	+
	(3q_B q_C + 2)q_Bq_C
	\sqrt{M}
\end{equation}
where $\Pi^{H(x)} = \pi_B^{H(x)} \tensor \pi_C^{H(x)}$ and
\begin{equation}
	M
	=
	\E_k \E_\ell \E_H
	\norm{
		\left(\ketbra{x}_{B_Q} \tensor \ketbra{x}_{C_Q}\right)
		\left(
			\left(U_B O_B^H\right)^k
			\tensor
			\left(U_C O_C^H\right)^\ell
		\right)
		\ket{\psi}
	}^2
\end{equation}
with $k \in \{0, \ldots, q_B-1\}$, $\ell \in \{0, \ldots, q_C-1\}$, and $H \in \Bool(\lambda, n)$.
\end{lemma}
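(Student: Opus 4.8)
The plan is to run the same oracle-splitting decomposition as in the proof of \cref{theorem:SL-NE}, but now for \emph{both} players simultaneously and without the luxury of factoring $\ket{\psi}$. Following \cref{equation:decomp1}, I would set $P_L = \ketbra{x}_{L_Q}$ and write $(U_L O_L^H)^{q_L} = V_L^H + W_L^H$ for $L \in \{B,C\}$, where $V_L^H = (U_L O_L^H(\Id - P_L))^{q_L}$ is the ``never query $x$'' part (hence independent of $H(x)$, i.e.\ $V_L^{H_{x,y}} = V_L^H$) and $W_L^H = \sum_k W_L^{H,k}$ collects the parts whose last query on $x$ occurs at step $k$. Expanding the tensor product gives four blocks,
\begin{equation}
	(U_B O_B^H)^{q_B}\tensor(U_C O_C^H)^{q_C}
	= V_B^H\tensor V_C^H + V_B^H\tensor W_C^H + W_B^H\tensor V_C^H + \sum_{k,\ell} W_B^{H,k}\tensor W_C^{H,\ell},
\end{equation}
and the idea is to treat the single ``both players query $x$'' block $\sum_{k,\ell} W_B^{H,k}\tensor W_C^{H,\ell}$ (exactly $q_B q_C$ vectors) as the bad terms and to absorb the rest into a good term $v_0$.

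For fixed $H$ I would apply \cref{theorem:vector-norm} with these $q_B q_C$ bad vectors $v_{k,\ell} = \Pi^{H(x)}(W_B^{H,k}\tensor W_C^{H,\ell})\ket{\psi}$; the hypotheses hold since each $W_L^{H,k}$ has norm at most $1$ and the total sum is $\Pi^{H(x)}$ applied to a unit vector. This yields a bound of the form $\norm{v_0}^2 + (3q_Bq_C+2)\sum_{k,\ell}\norm{v_{k,\ell}}$ for the target squared norm. Since $W_B^{H,k}\tensor W_C^{H,\ell}$ factors through $P_B(U_BO_B^H)^k\tensor P_C(U_CO_C^H)^\ell$ up to operators of norm $\le 1$, I get $\norm{v_{k,\ell}}\le \norm{(P_B\tensor P_C)((U_BO_B^H)^k\tensor(U_CO_C^H)^\ell)\ket{\psi}}$; averaging over $H$ and applying Jensen followed by Cauchy--Schwarz turns $\sum_{k,\ell}\E_H\norm{v_{k,\ell}}$ into $q_Bq_C\sqrt{M}$, producing the second term of the claimed bound.

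The heart of the argument --- and where the factor $9$ comes from --- is the estimate $\E_H\norm{v_0}^2 \le 9/2^n$. The key point is that a player who never queries $x$ holds a state independent of $H(x)$, and so cannot guess the uniformly random value $H(x)$ better than $2^{-n}$. To exploit this cleanly I would rewrite the good operator using $W_C^H = (U_CO_C^H)^{q_C}-V_C^H$ and $W_B^H = (U_BO_B^H)^{q_B}-V_B^H$ as
\begin{equation}
	V_B^H\tensor V_C^H + V_B^H\tensor W_C^H + W_B^H\tensor V_C^H
	= V_B^H\tensor(U_CO_C^H)^{q_C} + (U_BO_B^H)^{q_B}\tensor V_C^H - V_B^H\tensor V_C^H,
\end{equation}
so that in each of the three resulting terms at least one side is either a \emph{full unitary} evolution or an $x$-avoiding $V$. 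By the triangle inequality together with $(a+b+c)^2\le 3(a^2+b^2+c^2)$ it then suffices to bound the $H$-average of the squared norm of each term by $2^{-n}$. For a term such as $\Pi^{H(x)}(V_B^H\tensor(U_CO_C^H)^{q_C})\ket{\psi}$, I would invoke \cref{theorem:Hxy} to replace $H$ by $H_{x,y}$ and average over $y$, use $V_B^{H_{x,y}}=V_B^H$, bound $\pi_C^y\le\Id$, and then observe that the \emph{full} unitary $(U_CO_C^{H_{x,y}})^{q_C}$ cancels inside the norm; what remains is $\E_y\bra{\psi}(V_B^{H\dagger}\pi_B^yV_B^H\tensor\Id)\ket{\psi} = 2^{-n}\norm{(V_B^H\tensor\Id)\ket{\psi}}^2\le 2^{-n}$, using $\sum_y\pi_B^y=\Id$. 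The symmetric term is identical, and $V_B^H\tensor V_C^H$ (both sides $H(x)$-independent) is bounded by $2^{-n}$ directly from the mutual orthogonality of $\{\pi_B^y\tensor\pi_C^y\}_y$.

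I expect the main obstacle to be precisely this $v_0$ estimate. The naive mixed terms $V_B^H\tensor W_C^H$ correlate the non-blind player's \emph{non-unitary} operator with the measurement label $y=H(x)$ through the shared entanglement, which defeats the orthogonality trick one uses in the separable case. The decisive manoeuvre is to absorb the troublesome $W_C^H$ back into a full unitary by writing $W_C^H=(U_CO_C^H)^{q_C}-V_C^H$, so that the non-blind side becomes norm-preserving and its $y$-dependence disappears from the squared norm after the cancellation; the price of doing this on both sides is three terms rather than one, which is exactly the source of the constant $9 = 3^2$.
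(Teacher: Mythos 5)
Your proposal is correct and follows essentially the same route as the paper's proof: the same $V_L^H + W_L^H$ decomposition via \cref{theorem:algebra}, the same application of \cref{theorem:vector-norm} to the $q_Bq_C$ vectors coming from $W_B^H \tensor W_C^H$ followed by Jensen to produce $(3q_Bq_C+2)q_Bq_C\sqrt{M}$, and the same resampling of $H(x)$ via \cref{theorem:Hxy} together with the unitarity/orthogonality cancellations. The only divergence is in the bookkeeping for the good term: the paper keeps it as the two pieces $\left(U_BO_B^H\right)^{q_B}\tensor V_C^H$ and $V_B^H\tensor W_C^H$ and bounds $\alpha+\beta+2\sqrt{\alpha\beta}\leq(1+2)^2\cdot 2^{-n}$ using $\norm{W_C^{H}}\leq 2$, whereas you regroup into three pieces each bounded by $2^{-n}$ and use $(a+b+c)^2\leq 3(a^2+b^2+c^2)$ --- both land on exactly $9\cdot 2^{-n}$.
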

\begin{proof}
For $L \in \{B, C\}$, we define $P_L = \ketbra{x}_{L_Q}$.
Using \cref{theorem:algebra} and the fact that $U_L O_L^H = U_L O_L^H P_L + U_L O_L^H (\Id - P_L)$, we have that
\begin{equation}
\label{equation:decomp2}
\begin{aligned}
	\left(U_L O^H_L\right)^{q_L}
	=&
	\overbrace{
		\left(
			U_L
			O^H_L
			\left(\Id - P_L\right)
		\right)^{q_L}
	}^{= V_L^H}
	+
	\\
	&
	\sum_{k = 0}^{q_L - 1}
	\underbrace{
		\left(
			U_L
			O^H
			\left(\Id - P_L\right)
		\right)^{q_L - k - 1}
		U_L
		O^H_L
		P_L
		\left(
			U_L
			O^H_L
		\right)^k
	}_{= W_L^{H,k}}
\end{aligned}
\end{equation}
and we define $W^H_L = \sum_{k = 0}^{q_L - 1} W_L^{H, k}$.
This implies that
\begin{equation}
\begin{aligned}
	&
	\norm{
		\Pi^{H(x)}
		\left(
			\left(U_B O_B^H \right)^{q_B}
			\tensor
			\left(U_C O_C^H \right)^{q_C}
		\right)
		\ket{\psi}
	}^2
	\\
	&=
	\norm{
		\Pi^{H(x)}
		\left(
			\left(O_B O_B^H\right)^{q_B}
			\tensor
			V_C^H
			+
			V_B^H
			\tensor
			W_C^H
			+
			W_B^H
			\tensor
			W_C^H
		\right)
		\ket{\psi}
	}^2
	.
\end{aligned}
\end{equation}
We now claim that contribution from the $W_B^H \tensor W_C^H$ operator corresponds to the $M$ in the upper bound provided in the statement.
Indeed, using \cref{theorem:vector-norm}, the definition of the various $W$ operators, and properties of the operator norm on projectors and unitary operators, we have that
\begin{equation}
\begin{aligned}
	&
	\norm{
		\Pi^{H(x)}
		\left(
			\left(O_B O_B^H\right)^{q_B}
			\tensor
			V_C^H
			+
			V_B^H
			\tensor
			W_C^H
			+
			W_B^H
			\tensor
			W_C^H
		\right)
		\ket{\psi}
	}^2
	\\
	&\leq
	\norm{
		\Pi^{H(x)}
		\left(
			\left(O_B O_B^H\right)^{q_B}
			\tensor
			V_C^H
			+
			V_B^H
			\tensor
			W_C^H
		\right)
		\ket{\psi}
	}^2
	\\
	&\phantom{\leq} +
	(3 q_B q_C + 2) q_B q_C
	\E_k
	\E_\ell
	\norm{
		\left(P_B \tensor P_C\right)
		\left(
			\left(U_B O_B^H\right)^k
			\tensor
			\left(U_C O_C^H\right)^\ell
		\right)
		\ket{\psi}
	}
	.
\end{aligned}
\end{equation}
Using Jensen's inequality, the above inequality and the definition of $M$, we have that
\begin{equation}
\begin{aligned}
	&
	\E_H
	\norm{
		\Pi^{H(x)}
		\left(
			\left(O_B O_B^H\right)^{q_B}
			\tensor
			V_C^H
			+
			V_B^H
			\tensor
			W_C^H
			+
			W_B^H
			\tensor
			W_C^H
		\right)
		\ket{\psi}
	}^2
	\\
	&\leq
	\E_H
	\norm{
		\Pi^{H(x)}
		\left(
			\left(O_B O_B^H\right)^{q_B}
			\tensor
			V_C^H
			+
			V_B^H
			\tensor
			W_C^H
		\right)
		\ket{\psi}
	}^2
	+
	(3 q_B q_C + 2) q_B q_C
	\sqrt{M}
	.
\end{aligned}
\end{equation}

It now suffices to show that
\begin{equation}
	\E_H
	\norm{
		\Pi^{H(x)}
		\left(
			\left(O_B O_B^H\right)^{q_B}
			\tensor
			V_C^H
			+
			V_B^H
			\tensor
			W_C^H
		\right)
		\ket{\psi}
	}^2
	\leq
	\frac{9}{2^n}.
\end{equation}
By \cref{theorem:Hxy}, this is equivalent to showing that
\begin{equation}
	\E_H
	\E_y
	\norm{
		\Pi^{y}
		\left(
			\left(U_B O_B^{H_{x,y}}\right)^{q_B}
			\tensor
			V_C^{H_{x,y}}
			+
			V_B^{H_{x,y}}
			\tensor
			W_C^{H_{x,y}}
		\right)
		\ket{\psi}
	}^2
	\leq
	\frac{9}{2^n}
\end{equation}
In fact, it will be sufficient to show that for any particular $H$, the expectation over $y$ is bounded by $9 \cdot 2^{-n}$.
If, for any $H$, we define
\begin{equation}
	\alpha = \E_y \norm{\Pi^y \left(\left(U_B O_B^{H_{x,y}}\right) \tensor V_C^{H_{x,y}}\right) \ket{\psi}}^2
\end{equation}
and
\begin{equation}
	\beta = \E_y \norm{\Pi^y \left(V_B^{H_{x,y}} \tensor W_C^{H_{x,y}}\right) \ket{\psi}}^2
\end{equation}
then, using the triangle inequality and the fact that the operators in $\{\Pi^y\}_{y \in \{0,1\}^n}$ project on mutually orthogonal subspaces, we have that
\begin{equation}
	\E_y
	\norm{
		\Pi^{y}
		\left(
			\left(O_B O_B^{H_{x,y}}\right)^{q_B}
			\tensor
			V_C^{H_{x,y}}
			+
			V_B^{H_{x,y}}
			\tensor
			W_C^{H_{x,y}}
		\right)
		\ket{\psi}
	}^2
	\leq
	\alpha + \beta + 2 \sqrt{\alpha \beta}.
\end{equation}
Now, noting that $V_B^{H_{x,y}}$ and $V_C^{H_{x,y}}$ do not depend on the value of $y$, as they always project on a subspace which does not query the oracle $H$ on $x$, and using properties of the operator norm, we have that
\begin{align}
	\alpha
	&=
	\E_y \norm{\Pi^y \left(\left(U_B O_B^{H_{x,y}}\right) \tensor V_C^{H_{x,y}}\right) \ket{\psi}}^2
	\\
	&\leq
	\E_y \norm{\left(\Id_B \tensor \pi^y_C\right) \left(\Id_B \tensor V_C^{H_{x,y}}\right) \ket{\psi}}^2
	\\
	&=
	\frac{1}{2^n} \norm{\left(\Id_B \tensor V_C^H\right) \ket{\psi}}^2
	\leq
	\frac{1}{2^n}.
\end{align}
A similar reasoning yields that $\beta \leq 4 \cdot 2^{-n}$, where the $4$ is a result of squaring the upper bound
\begin{equation}
\norm{W_C^{H_{x,y}}} \leq \norm{\left(U_CO^{H_{x,y}}_C\right)^{q_C}} + \norm{V_C^{H_{x,y}}} \leq 2.
\end{equation}
Finally, noting that $\alpha + \beta + 2 \sqrt{\alpha \beta} \leq 9 \cdot 2^{-n}$ finishes the proof.
\end{proof}

\subsection{Main Results} %
\label{sec:maintheorem}   %

We now have all the necessary tools to prove our main results.

\begin{theorem}
\label{thm:main-security-F-encryption}
Let $\mc{S}$ be the QECM scheme defined in \cref{scheme:PRF-QECM}.
If the qPRF is modeled by a quantum oracle, then $\mc{S}$ is $\log_2(9)$-uncloneable secure.
\end{theorem}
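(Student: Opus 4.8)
The plan is to reduce the uncloneable-security game to the two-player oracle problem solved by \cref{theorem:SL}, and then to control the residual ``guess $x$'' term using the monogamy-of-entanglement bound of \cref{theorem:TFKW13-ours}. First I would pass from the qPRF to a truly random oracle. Fixing a cloning attack $(\texttt{A}, \texttt{B}, \texttt{C})$, I would assemble the entire experiment --- sampling $m, s, \theta$, preparing the ciphertext $\ketbra{c} \tensor \ketbra{x^\theta}$ with $c = m \xor f_\lambda(s,x)$, running $A$, then $B$ and $C$, and testing $m_B = m_C = m$ --- into a single efficient distinguisher $\texttt{D}^H$ that touches $f_\lambda(s, \cdot)$ only through oracle queries. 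Its acceptance probability is exactly the winning probability, so by \cref{def:qPRF} this probability changes by at most a negligible amount when $f_\lambda(s,\cdot)$ is replaced by a uniformly random $H \in \Bool(\lambda, n)$. It then suffices to bound the winning probability in the random-oracle world.

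Next I would reparametrize so that the post-split state is independent of $H$. Since $m$ is uniform, sampling $m$ and setting $c = m \xor H(x)$ is the same as sampling $c$ uniformly and \emph{defining} $m = c \xor H(x)$; under this reordering the ciphertext $\ketbra{c} \tensor \ketbra{x^\theta}$ no longer depends on $H$, which now survives only in the winning target. Winning means both players output $m = c \xor H(x)$, that is (after XORing their outputs with the known $c$) both guess $H(x)$. Dilating $A$, $B$, $C$ to unitaries with oracle gates and relabelling the final measurements, for each fixed $c, x, \theta$ the post-split pure state $\ket{\psi_{c,x,\theta}}$ together with the query unitaries $(U_B O_B^H)^{q_B}$, $(U_C O_C^H)^{q_C}$ match the hypotheses of \cref{theorem:SL} exactly, with $\Pi^{H(x)}$ the projector onto ``both guess $H(x)$''. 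Applying the lemma, averaging over $c, x, \theta$, and using concavity of the square root to pull the expectation inside, I obtain the bound $\frac{9}{2^n} + (3 q_B q_C + 2) q_B q_C \sqrt{\E_{c,x,\theta} M_{c,x,\theta}}$.

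It remains to show $\E_{c,x,\theta} M_{c,x,\theta}$ is negligible, and this is where monogamy enters. The quantity $M_{c,x,\theta}$ averages, over a random stopping query $k,\ell$ and random $H$, the probability that both players recover $x$ by measuring their query registers. For each fixed $c, H, k, \ell$, the splitting map $\sigma \mapsto A(\ketbra{c} \tensor \sigma)$ is a fixed CPTP map that does \emph{not} depend on $\theta$ or $x$ (it runs before the key is revealed), whereas the partial computations followed by the query-register measurement define $\theta$-dependent POVMs $\{B^\theta_x\}$ and $\{C^\theta_x\}$. This is precisely the configuration of \cref{theorem:TFKW13-ours}, so the inner $\E_\theta \E_x$ is at most $\left(\frac{1}{2} + \frac{1}{2\sqrt 2}\right)^\lambda$, and averaging over $c, H, k, \ell$ preserves this. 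Since $q_B, q_C$ are polynomial in $\lambda$, the error term is negligible, yielding a winning probability of at most $\frac{9}{2^n} + \negl(\lambda) = 2^{-n + \log_2(9)} + \negl(\lambda)$, as required.

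The hard part will be the second step: arranging the experiment so that the state handed to \cref{theorem:SL} is genuinely $H$-independent (which is exactly what uniformity of $m$ buys us), while simultaneously verifying that the ``guess $x$'' POVMs extracted for the monogamy bound may legitimately depend on $\theta$ even though the splitting map may not. Lining up the dilations, the ancilla bookkeeping, and the $c$-relabelling of outputs with the precise register indices and Hilbert-space dimensions of \cref{theorem:SL} is the main source of friction; the probabilistic estimates themselves are then routine.
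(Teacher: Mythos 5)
Your proposal is correct and follows essentially the same route as the paper's proof: the reparametrization of the uniform message (sampling $c$ uniformly and targeting $H(x)$) is exactly the paper's use of \cref{theorem:string-permute} to move the $H(x)$-dependence from the ciphertext state onto the winning projector, after which the purification, the application of \cref{theorem:SL}, the Jensen step, and the identification of the residual guessing term as a monogamy-of-entanglement game bounded by \cref{theorem:TFKW13-ours} all match the paper. The points you flag as friction (dilations, ensemble/ancilla bookkeeping, and the $\theta$-dependence residing only in the POVMs rather than the splitting map) are handled in the paper exactly as you anticipate.
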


When we say that we model a qPRF as a quantum oracle, we mean that we model the adversaries' evaluations of the qPRF as queries to an oracle.
Specifically, if the key used in the encryption was $(s, \theta)$, we assume that the adversaries do not receive $s$ when the key is broadcasted, but rather that they receive quantum oracle access to the function $f_\lambda(s, \cdot)$.
Essentially, we are assuming that the adversaries only use $s$ to compute the function $f_\lambda(s, \cdot)$ and that they treat it as a black box.
Recall that we indicate that a circuit has oracle access to a function by placing this function in superscript to the circuit and its CPTP map.

\begin{proof}
Let $\mc{A} = (\texttt{A}, \texttt{B}, \texttt{C})$ be a cloning attack against $\mc{S}$ as described in \cref{definition:UA}.
We need to show that the probability that the adversaries can simultaneously guess a message chosen uniformly at random is upper bounded by $9 \cdot 2^{-n} + \eta(\lambda)$ for a negligible function $\eta$.
Furthermore, since the adversaries treat the qPRF as an oracle, it suffices to show that their winning probability is upper bounded by $9 \cdot 2^{-n} + \eta(\lambda)$ when averaged over all functions in $\Bool(\lambda, n)$ and not only the functions $\{f_\lambda(s, \cdot)\}_{s \in \{0,1\}^\lambda}$.
Indeed, by definition of a qPRF, their winning probability in both cases can only differ by a negligible function of $\lambda$.

The remainder of the proof is an application of \cref{theorem:SL} followed by application of \cref{theorem:TFKW13-ours}.

Accounting for the randomness of the encryption and for a fixed and implicit~$\lambda$, the quantity we wish to bound is then given by
\begin{equation}
	\omega
	=
	\E_H
	\E_\theta
	\E_x
	\E_m
	\Tr\left[
		P^m
		\left(B^H_\theta \tensor C^H_\theta\right)
		\circ
		A
		\left(
			\ketbra{m \xor H(x)} \tensor \ketbra{x^\theta}
		\right)
	\right]
\end{equation}
where $P^m = \ketbra{m} \tensor \ketbra{m}$ and $H \in \Bool(\lambda, n)$.
Then, by using \cref{theorem:string-permute} with respect to the expectation over $m$ to move the dependence on the string $H(x)$ from the state to the projector, we have that
\begin{equation}
	\omega
	=
	\E_H
	\E_\theta
	\E_x
	\E_m
	\Tr\left[
		P^{m \xor H(x)}
		\left(B^H_\theta \tensor C^H_\theta\right)
		\circ
		A
		\left(
			\ketbra{m} \tensor \ketbra{x^\theta}
		\right)
	\right]
	.
\end{equation}

Using standard purification arguments, we add auxiliary states $\ketbra{\text{aux-B}}$ and $\ketbra{\text{aux-C}}$ to the state $A(\ketbra{m} \tensor \ketbra{x^\theta})$, replace the CPTP maps $B^H_\theta$ and $C^H_\theta$ by unitary operators on the resulting larger Hilbert spaces and similarly replace the projectors $\ketbra{m}$ by projectors $\{\pi_B^m\}_{m \in \{0,1\}^n}$ and $\{\pi_C^m\}_{m \in \{0,1\}^n}$ on these larger Hilbert spaces.

Following \cite{BDF+11}, these purified unitary operators will be of the form $\left(U^\theta_L O_L^H\right)^{q_L}$, acting on a Hilbert space of the form $\mc{Q}(\lambda)_{L_Q} \tensor \mc{Q}(n)_{L_R} \tensor \mc{Q}(s_L)_{L_S}$ for some $q_L, s_L \in \N^+$ as they model oracle computations.
In particular, we note that $q_L$ represents the number of queries made to the oracle by that particular party.
We also assume that
\begin{equation}
\begin{aligned}
	\rho^{m, x, \theta}
	&=
	A(\ketbra{m} \tensor \ketbra{x^\theta}) \tensor \ketbra{\text{aux-B}} \tensor \ketbra{\text{aux-C}}
	\\
	&\in
	\mc{D}(\mc{Q}(\lambda)_{B_Q} \tensor \mc{Q}(n)_{B_R} \tensor \mc{Q}(s_B)_{B_S} \tensor \mc{Q}(\lambda)_{C_Q} \tensor \mc{Q}(n)_{C_R} \tensor \mc{Q}(s_C)_{C_S}).
\end{aligned}
\end{equation}
Finally, we can write $\rho^{m, x, \theta}$ as an ensemble of pure states, which is to say that
\begin{equation}
	\rho^{m, x, \theta} = \sum_{i \in I^{m, x, \theta}} p_i \ketbra{\psi_i^{m, x, \theta}}
\end{equation}
for some index set $I^{m, x, \theta}$, some non-zero $p_i$ which sum to $1$, and some unit vectors $\ket{\psi_i^{m, x, \theta}}$.
It then follows that $\omega$ can be expressed as
\begin{equation}
	\E_m
	\E_\theta
	\E_x
	\E_H
	\sum_{i \in I^{m, x, \theta}}
	p_i
	\norm{
		\left(\pi_B^{m\xor H(x)} \tensor \pi_C^{m\xor H(x)}\right)
		\left(
			\left(U^\theta_B O_B^H\right)^{q_B}
			\tensor
			\left(U^\theta_C O_C^H \right)^{q_C}
		\right)
		\ket{\psi_i^{m,x,\theta}}
	}^2
	.
\end{equation}
Noting that we can bring the expectation with respect to $H$ into the summation, we can then use \cref{theorem:SL} to upper bound $\omega$ by
\begin{equation}
\label{eq:use-lemma}
	\frac{9}{2^n}
	+
	q
	\E_m \E_\theta \E_x \sum_{i \in I^{m, x, \theta}} p_i
	\sqrt{
		\E_H
		\E_k
		\E_\ell
		\norm{
			Q_x
			\left(
				\left(U_BO_B^H\right)^{q_B}
				\tensor
				\left(U_C O_C^H \right)^{q_C}
			\right)
			\ket{\psi_i^{m, x, \theta}}
		}^2
	}
\end{equation}
where $q = (3 q_B q_C + 2) q_B q_C$ and $Q_x = \ketbra{x}_{Q_B} \tensor \ketbra{x}_{Q_C}$.
Defining
\begin{equation}
	\beta_{x}^{\theta, H, k}
	=
	\left(\left(U_B^\theta O_B^H\right)^{q_B}\right)^\dag
	\ketbra{x}_{Q_B}
	\left(\left(U_B^\theta O_B^H\right)^{q_B}\right),
\end{equation}
and similarly for $\gamma_x^{\theta, H, \ell}$ by replacing every instance of $B$ with $C$, we use Jensen's lemma to bring the remaining expectations and sums into the square root and obtain
\begin{equation}
	\omega
	=
	\frac{9}{2^n} + q \sqrt{
		\E_m
		\E_\theta
		\E_x
		\E_H
		\E_k
		\E_\ell
		\Tr\left[
			\left(\beta_x^{\theta, H, k} \tensor \gamma^{\theta, H, k} \right)
			\rho^{m, x, \theta}
		\right]
	}.
\end{equation}
Letting $\Phi_m$ to be the CPTP map defined by
\begin{equation}
	\rho
	\mapsto
	A \left(\ketbra{m} \tensor \rho\right)
	\tensor
	\ketbra{\text{aux-B}}
	\tensor \ketbra{\text{aux-C}}
\end{equation}
we see that, for any fixed $H$, $k$, $\ell$, and $m$, \cref{theorem:TFKW13-ours} implies that
\begin{equation}
	\E_x
	\E_\theta
	\Tr\left[
		\left(\beta_x^{\theta, H, k} \tensor \gamma^{\theta, H, k} \right)
		\rho^{m, x, \theta}
	\right]
	\leq
	\left(\frac{1}{2} + \frac{1}{2\sqrt{2}}\right)^\lambda
\end{equation}
since $\rho^{m, x, \theta} = \Phi_m\left(\ketbra{x^\theta}\right)$.
Thus,
\begin{equation}
	\omega \leq \frac{9}{2^n} + q\left(\sqrt{\frac{1}{2} + \frac{1}{2\sqrt{2}}}\right)^\lambda.
\end{equation}
Finally, since $\texttt{B}$ and $\texttt{C}$ are efficient quantum circuits, they may query the oracle a number of time which grows at most polynomially in $\lambda$.
Thus, $q \leq p(\lambda)$ for some polynomial $p$.
Noting that $\lambda \mapsto p(\lambda) \cdot \left(\sqrt{\frac{1}{2} + \frac{1}{2\sqrt{2}}}\right)^\lambda$ is a negligible function completes the proof.
\end{proof}

We can strengthen this result if the adversaries do not share any entanglement.

\begin{theorem}
\label{thm:main-security-F-encryption-NE}
Let $\mc{S}$ be the QECM scheme given in \cref{scheme:PRF-QECM}.
If the qPRF is modeled by a quantum oracle and the adversaries cannot share any entanglement, then $\mc{S}$ is $0$-uncloneable secure.
\end{theorem}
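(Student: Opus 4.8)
The plan is to follow the proof of \cref{thm:main-security-F-encryption} almost verbatim, replacing the one invocation of the entangled lemma \cref{theorem:SL} by the unentangled lemma \cref{theorem:SL-NE}. This substitution is precisely what trades the factor of $9$ for a factor of $1$, thereby upgrading $\log_2(9)$-uncloneable security to $0$-uncloneable security. The single genuinely new ingredient is verifying that the hypothesis of \cref{theorem:SL-NE}---that the shared state be a \emph{product} vector---is satisfied, and this is exactly where the no-entanglement assumption is used.

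First I would set up the reduction identically to the entangled case. Since $\texttt{B}$ and $\texttt{C}$ treat the qPRF as an oracle, \cref{def:qPRF} lets me bound the winning probability $\omega$ averaged over a uniformly random $H \in \Bool(\lambda, n)$ rather than over the keyed family, at the cost of a negligible additive error. I would then apply \cref{theorem:string-permute} in the expectation over $m$ to shift the $H(x)$ dependence off the encrypted state and onto the measurement operators $\pi_B^{m \xor H(x)} \tensor \pi_C^{m \xor H(x)}$, and purify the maps $A$, $B^H_\theta$, $C^H_\theta$ into oracle unitaries of the form $\left(U^\theta_L O^H_L\right)^{q_L}$. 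The crucial detail here is to purify $\texttt{B}$ and $\texttt{C}$ using \emph{separate} auxiliary registers $\ket{\text{aux-B}}$ and $\ket{\text{aux-C}}$, so that the enlarged state still respects the $B \mid C$ tensor factorization.

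The key step is then the following. Under the no-entanglement assumption, the output $A\left(\ketbra{m} \tensor \ketbra{x^\theta}\right)$ is separable across the $B \mid C$ cut, and tensoring on the product auxiliary state keeps the full state $\rho^{m,x,\theta}$ separable. Since any separable state is a convex combination of product pure states, I may write $\rho^{m,x,\theta} = \sum_i p_i \ketbra{\psi_i^{m,x,\theta}}$ with each $\ket{\psi_i^{m,x,\theta}} = \ket{\psi_{i,B}^{m,x,\theta}} \tensor \ket{\psi_{i,C}^{m,x,\theta}}$ a \emph{product} vector. This product structure is exactly the hypothesis needed to invoke \cref{theorem:SL-NE} on each term of the ensemble, bounding the contribution of $\ket{\psi_i^{m,x,\theta}}$ by $\frac{1}{2^n} + (3q+2)q\,\sqrt[4]{M_i}$ where $q = q_B + q_C$ and $M_i$ is the associated simultaneous-guessing quantity for $x$.

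Finally I would conclude as in the entangled case. Using Jensen's inequality---now with the concave fourth root in place of the square root---I would pull $\E_m \E_\theta \E_x \sum_i p_i$ inside the root, leaving the single averaged quantity $\E_m \E_\theta \E_x \sum_i p_i M_i^{m,x,\theta}$. Writing the pulled-back operators $\beta^{\theta,H,k}_x$ and $\gamma^{\theta,H',\ell}_x$ as POVMs on Bob's and Charlie's registers and recalling that $\rho^{m,x,\theta} = \Phi_m\left(\ketbra{x^\theta}\right)$, this averaged quantity is at most $\left(\frac{1}{2} + \frac{1}{2\sqrt{2}}\right)^\lambda$ by \cref{theorem:TFKW13-ours}; the two \emph{independent} oracles $H, H'$ appearing in the unentangled definition of $M$ cause no difficulty, as the monogamy bound is agnostic to how each side's POVM is built. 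Since $q$ is polynomial in $\lambda$, this gives $\omega \leq 2^{-n} + \negl(\lambda) = 2^{-n+0} + \negl(\lambda)$, which is $0$-uncloneable security. I expect the main obstacle to lie not in any new estimate but in the careful bookkeeping that separability is genuinely preserved through purification and that the product-pure-state ensemble is compatible with the oracle-unitary purification.
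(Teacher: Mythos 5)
Your proposal is correct and takes essentially the same approach as the paper, whose own proof is exactly the one-line sketch "follow the proof of \cref{thm:main-security-F-encryption} using \cref{theorem:SL-NE} in place of \cref{theorem:SL} at the step corresponding to \cref{eq:use-lemma}." Your additional bookkeeping --- decomposing the separable state into an ensemble of product pure states to meet the hypothesis of \cref{theorem:SL-NE}, using separate auxiliary registers in the purification, applying Jensen with the fourth root, and noting that the two independent oracles in $M$ pose no obstacle to invoking \cref{theorem:TFKW13-ours} --- correctly fills in the details the paper leaves implicit.
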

\begin{proof}[Sketch]
Follow the proof of \cref{thm:main-security-F-encryption} using the bound given by \cref{theorem:SL-NE}, instead of \cref{theorem:SL}, in the step corresponding to \cref{eq:use-lemma}.
\end{proof}

\begin{corollary}
\label{thm:main-security-F-encryption-NE-IND}
Let $\mc{S}$ be the QECM scheme given in \cref{scheme:PRF-QECM} with constant message size, i.e.: $n(\lambda) = n$.
If the qPRF is modeled by a quantum oracle and the adversaries cannot share any entanglement, then $\mc{S}$ is indistinguishable-uncloneable secure.
\end{corollary}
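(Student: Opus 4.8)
The plan is to derive this statement by composing two results already established. By \cref{thm:main-security-F-encryption-NE}, the scheme $\mc{S}$ is $0$-uncloneable secure whenever the qPRF is modelled as a quantum oracle and the two adversaries share no entanglement. By \cref{theorem:0-UNC=>UNC-IND}, any QECM that is $0$-uncloneable secure and has constant message size is also uncloneable-indistinguishable secure; and here $n(\lambda) = n$ is constant by hypothesis. Morally, the corollary is just the concatenation of these two implications.

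The point that needs attention is that \cref{theorem:0-UNC=>UNC-IND} is phrased for the unrestricted adversarial model, while \cref{thm:main-security-F-encryption-NE} only guarantees $0$-uncloneable security under the two standing restrictions (oracle access in place of the key $s$, and no shared entanglement). I therefore would not invoke \cref{theorem:0-UNC=>UNC-IND} as an opaque black box, but rather re-run its proof on a restricted attack. That proof takes an arbitrary cloning-distinguishing attack $\mc{A} = (\texttt{G}, \texttt{A}, \texttt{B}, \texttt{C})$ and builds from it a cloning attack $\mc{A}' = (\texttt{A}', \texttt{B}', \texttt{C}')$ whose $0$-uncloneable success probability is quantitatively tied to the distinguishing advantage of $\mc{A}$. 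I would feed in an $\mc{A}$ that is unentangled and oracle-accessing, and verify that the resulting $\mc{A}'$ retains both features.

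Concretely, I would observe that the reduction is purely local re-wiring: $\texttt{B}'$ and $\texttt{C}'$ are assembled from $\texttt{B}$ and $\texttt{C}$ acting only on their own registers and local ancillas, and the splitting map $\texttt{A}'$ is built from $\texttt{A}$ (and $\texttt{G}$) without generating cross-party correlations beyond those already output by $\texttt{A}$; hence separability of the post-split state is inherited and $\mc{A}'$ remains unentangled. Likewise, because the reduction only reconnects existing circuitry, each evaluation of $f_\lambda(s,\cdot)$ in $\mc{A}$ survives as an oracle query in $\mc{A}'$, so $\mc{A}'$ is still an oracle-model attack making at most polynomially many queries. With these two checks in place, \cref{thm:main-security-F-encryption-NE} bounds the $0$-uncloneable success of $\mc{A}'$ by $2^{-n} + \negl(\lambda)$, and pushing this through the reduction's quantitative relation bounds the uncloneable-indistinguishable advantage of $\mc{A}$ by $\tfrac{1}{2} + \negl(\lambda)$. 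The main obstacle is exactly this structural bookkeeping: ensuring that the circuit surgery of \cref{theorem:0-UNC=>UNC-IND} neither introduces shared entanglement nor inflates the query count, since those are precisely the hypotheses that \cref{thm:main-security-F-encryption-NE} needs.
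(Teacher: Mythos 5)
Your proposal matches the paper's proof, which is precisely the two-step composition of \cref{thm:main-security-F-encryption-NE} with \cref{theorem:0-UNC=>UNC-IND}. The additional bookkeeping you perform --- checking that the reduction inside \cref{theorem:0-UNC=>UNC-IND} keeps the constructed cloning attack $\mc{A}'$ within the restricted class (unentangled, since $\texttt{A}'$ only runs the original splitting map and appends classical copies of $m'$, and oracle-model with no inflation of the query count, since $\texttt{B}'$ and $\texttt{C}'$ merely re-wire $\texttt{B}$ and $\texttt{C}$) --- is a genuine subtlety that the paper's one-line sketch glosses over, and your resolution of it is the correct one.
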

\begin{proof}[Sketch]
Use \cref{thm:main-security-F-encryption-NE} with \cref{theorem:0-UNC=>UNC-IND}.
\end{proof}

\section*{Acknowledgments} %

This material is based upon work supported by the
U.S. Air Force Office of Scientific Research under award number FA9550-17-1-0083,
Canada's NSERC,
an Ontario ERA,
and the University of Ottawa's Research Chairs program.


\bibliographystyle{bibtex/bst/alphaarxiv.bst}
\bibliography{bibtex/bib/full.bib,bibtex/bib/quantum.bib}

\appendix %

\section{Technical Proofs} %
\label{section:proofs}     %

\begin{proof}[\cref{theorem:0-UNC=>UNC-IND}]
Let $\mc{S}$ be a $0$-uncloneable secure QECM scheme with constant message size, i.e.: $n(\lambda) = n$.
Let $\mc{A} = (\texttt{G}, \texttt{A}, \texttt{B}, \texttt{C})$ be a cloning-distinguishing attack against $\mc{S}$.
We will construct a cloning attack $\mc{A}' = (\texttt{A}', \texttt{B}', \texttt{C}')$ and show that the winning probability of $\mc{A}$ is at most $2^{n-1}$ times the winning probability of $\mc{A}'$.
Since $\mc{S}$ is uncloneable secure, $\mc{A}'$'s winning probability is bounded by $2^{-n} + \eta(\lambda)$, which is sufficient to prove this theorem.

We now describe the circuits in the $\mc{A}'$ attack.
A schematic representation of this construction is given in \cref{figure:UNC-IND=>IND}.
\begin{itemize}
	\item[$\texttt{A}'$:]
		Run $\texttt{G}$ from $\mc{A}$ and obtain the state $G(1) \in \mc{D}(\mc{H}_S \tensor \mc{H}_M)$.
 		Measure the $M$ register in the computational basis and call the result $m'$.
		Discard the register $M$ and keep the register~$S$.
		Then, run $\texttt{A}$ from $\mc{A}$ on the state received as input and the state that was kept in the register $S$.
		In addition, give a copy of $m'$ to both $\texttt{B}'$ and $\texttt{C}'$.
	\item[$\texttt{B}'$:]
		Run $\texttt{B}$ from $\mc{A}$ on the state obtained from $\texttt{A}'$ and the encryption key.
		Measure the output in the computational basis and if the result is $0$, output $\ketbra{0^n}$.
		If the result is $1$, output the $\ketbra{m'}$ from the string which was given by $\texttt{A}'$.
	\item[$\texttt{C}'$:]
		Analogous to $\texttt{B}'$.
\end{itemize}

\begin{figure}
\begin{center}
\begin{tikzpicture}

	\draw (2,0) rectangle (6.5,5.5);
	\node at (2.25,5.25) {$\texttt{A}'$};
		\draw (4.5,1.5) rectangle (5,3);
		\node at (4.75,2.25) {$\texttt{A}$};
		
		\draw (2.5,2.5) rectangle (3,4);
		\node at (2.75,3.25) {$\texttt{G}$};

		\draw (3.5,3.5) rectangle (4,4);
		\draw (3.75,3.65) -- (3.9,3.85);
		\draw (3.75,3.65) ++(180:0.15) arc (180:0:.15);

		\node at (0.85,1.75) {$Enc_k(\ketbra{m})$};

		\draw[->] (3,2.75) -- (4.5,2.75);
		\draw[->] (3,3.75) -- (3.5,3.75);
		\draw[->] (1.9,1.75) -- (4.5,1.75);

		\draw[double] (4,3.75) -- (5.5,3.75);

	\draw (7,0) rectangle (10.5,2.5);
	\node at (7.3,2.25) {$\texttt{C}'$};
		\draw (7.5, 0.5) rectangle (8,1);
		\node at (7.75,0.75) {$\texttt{C}_k$};

		\draw (8.5,0.5) rectangle (9,1);
		\draw (8.75, 0.65) -- (8.9,0.85);
		\draw (8.75, 0.65) ++(180:0.15) arc (180:0:.15);

		\draw (9.5, 0.5) rectangle (10,2);
		\node at (9.75,1.25) {$\texttt{D}$};

		\node at (11.3,1.25) {$\ketbra{m_C}$};	

		\draw[->] (8,0.75) -- (8.5,0.75);
		\draw[->,double] (9,0.75) -- (9.5,0.75);
		\draw[->] (10,1.25) -- (10.6,1.25);
	
	\draw (7,3) rectangle (10.5,5.5);
	\node at (7.3,5.25) {$\texttt{B}'$};
		\draw (7.5, 3.5) rectangle (8,4);
		\node at (7.75,3.75) {$\texttt{B}_k$};

		\draw (8.5,3.5) rectangle (9,4);
		\draw (8.75,3.65) -- (8.9,3.85);
		\draw (8.75,3.65) ++(180:0.15) arc (180:0:.15);

		\draw (9.5,3.5) rectangle (10,5);
		\node at (9.75,4.25) {$\texttt{D}$};

		\node at (11.3,4.25) {$\ketbra{m_B}$};

		\draw[->] (8,3.75) -- (8.5,3.75);
		\draw[->,double] (9,3.75) -- (9.5,3.75);
		\draw[->] (10,4.25) -- (10.6,4.25);

		\draw[->,double] (5.5,3.75) -- (6,4.75) -- (9.5,4.75);
		\draw[->,double] (5.5,3.75) -- (6,1.75) -- (9.5,1.75);
		\draw[->] (5,2.75) -- (5.5,2.75) -- (6,3.75) -- (7.5,3.75);
		\draw[->] (5,1.75) -- (5.5,1.75) -- (6, 0.75) -- (7.5,0.75);

	\node at (3.75,3) {$\sigma^{m'}$};
	\node at (4.75,4) {$m'$};
	\node at (9.25,2) {$m'$};
	\node at (9.25,5) {$m'$};
	\node at (9.25,1) {$b_C$};
	\node at (9.25,4) {$b_B$};

	\draw[dashed] (6.75,2.75) ellipse (0.1 and 2.75);
	\node at (6.75,5.9) {$\rho_{\texttt{A}'}^{m,k}$};
	\draw[dashed] (8.25,2.75) ellipse (0.1 and 2.75);
	\node at (8.25,5.9) {$\rho_{\texttt{A}'\texttt{BC}}^{m,k}$};
	\draw[dashed] (4.25,2.25) ellipse (0.1 and 0.75);
	\node at (4.25,1.1) {$\sigma^{m,k,m'}$};
	
\end{tikzpicture}
\end{center}
\caption{
	\label{figure:UNC-IND=>IND}
	A cloning-distinguishing attack $\mc{A} = (\texttt{G}, \texttt{A}, \texttt{B}, \texttt{C})$ is used to construct a cloning attack $\mc{A}' = (\texttt{A}', \texttt{B}', \texttt{C}')$.
	The $\texttt{D}$ circuit outputs $\ketbra{0^n}$ if $b = 0$ and $\ketbra{m'}$ if $b = 1$.
}
\end{figure}

We want to obtain a description of the overall state up to the point immediately after the $\texttt{B}$ and $\texttt{C}$ circuits are applied by $\texttt{B}'$ and $\texttt{C}'$.
Conditioned on the message $m$ being encrypted with the key $k$, we will denote this state by $\rho^{m,k}_{\texttt{A}'\texttt{BC}}$.
From this state, we will be able to determine the winning probability of $\mc{A}'$.

Note that $\texttt{A}'$'s first step is to obtain $G(1)$ and measure the $M$ register in the computational basis.
We define $p_{m'} = \Tr\left[(I_S \tensor \ketbra{m'}_{M}) G(1) \right]$ to be the probability that $m'$ is measured and
\begin{equation}
	\sigma^{m'}
	=
	\Tr_{M}\left[
	\frac{
		(I_S \tensor \ketbra{m'}_M) G(1) (I_S \tensor \ketbra{m'}_M)
	}{
		p_{m'}
	}
	\right]
\end{equation}
to be the post measurement state conditioned on this result and after tracing out the $M$ register.
Defining
\begin{equation}
	\sigma^{m,k,m'}
	=
	\sigma_{m'} \tensor Enc_k(\ketbra{m})
\end{equation}
allows us to write the output state of $\texttt{A}'$, conditioned on $m$ being originally encrypted with the key $k$, as
\begin{equation}
	\rho^{m, k}_{\texttt{A'}}
	=
	\sum_{m' \in \{0,1\}^n}
		p_{m'}
		A(\sigma^{m,k,m'}) \tensor \ketbra{m'} \tensor \ketbra{m'}.
\end{equation}
Thus, we have that
\begin{equation}
	\rho^{m,k}_{\texttt{A'}\texttt{B}\texttt{C}}
	=
	\sum_{m' \in \{0,1\}^n}
		p_{m'}
		(B_k \tensor C_k) \circ A (\sigma^{m, k, m'})
		\tensor
		\ketbra{m'} \tensor \ketbra{m'}.
\end{equation}
To compute $\mc{A}'$'s winning probability on message $m$ and key $k$, we define
\begin{align}
	q_{m,k}
	&=
	\Tr\left[
		(\ketbra{1} \tensor \ketbra{1} \tensor \ketbra{m} \tensor \ketbra{m})
		\rho_{\texttt{A'BC}}^{m,k}
	\right] \\
	&=
	p_m
	\Tr\left[
		(\ketbra{1} \tensor \ketbra{1})
		(B \tensor C) \circ A (\sigma^{m,k,m})
	\right]
\end{align}
and note that if $m \not= 0^n$, then $\mc{A}'$'s winning probability is given by $q_m$.
If $m = 0^n$, we must also account for the possibility that the measurements after the $\texttt{B}$ and $\texttt{C}$ circuits both output $0$. Thus, $\mc{A}'$'s winning probability in this case is at least
\begin{equation}
	q_{0^n, k}
	+
	\Tr\left[
		(\ketbra{0} \tensor \ketbra{0})
		\left(
			(B_k \tensor C_k) \circ A\left(\sum_{m' \in \{0,1\}^n} p_{m'} \sigma^{0^n,k,m'}\right)
		\right)
	\right]
\end{equation}
as we ignore any winning scenarios where the measurement results are different.

We then have that $\mc{A}'$'s winning probability is at least
\begin{equation}
\label{equation:A'-winning}
	\E_{k \gets \mc{K}}
	\frac{1}{2^n}
	\left(
	\sum_{m \in \{0,1\}^n}
	q_{m, k}
	+
	\Tr\left[
		(\ketbra{0} \tensor \ketbra{0})
		\left(
			(B_k \tensor C_k) \circ A\left(\sigma^{0^n,k}\right)
		\right)
	\right]
	\right)
\end{equation}
where $\sigma^{0^n,k} = \sum_{m' \in \{0,1\}^n} p_{m'} \sigma^{0^n, k, m'}$.
Since $\mc{S}$ is uncloneable secure, there exists a negligible function $\eta$ such that \cref{equation:A'-winning} is upper bounded by $2^{-n} + \eta(\lambda)$.
This implies that
\begin{equation}
	\label{equation:A-winning}
	\E_{k \gets \mc{K}}
	\frac{1}{2}
	\left(
	\sum_{m \in \{0,1\}^n}
	q_{m, k}
	+
	\Tr\left[
		(\ketbra{0} \tensor \ketbra{0})
		\left(
			(B_k \tensor C_k) \circ A\left(\sigma^{0^n,k}\right)
		\right)
	\right]
	\right)
\end{equation}
is upper bounded by $\frac{1}{2} + 2^{n-1}\eta(\lambda)$.
Noting that \cref{equation:A-winning} is precisely $\mc{A}$'s winning probability and that $\lambda \mapsto 2^{n-1}\eta(\lambda)$ is negligible completes the proof.
\end{proof}

\begin{proof}[\cref{theorem:TFKW13-ours}]
It suffices to apply \cref{theorem:TFKW13} with the state
\begin{equation}
	\rho
	=
	\left(\Id_A \tensor \Phi_{A'}\right) \ketbra{\text{EPR}_\lambda}_{A A'}
	=
	\frac{1}{2^\lambda}\sum_{r,s \in \{0,1\}^\lambda}
	\ketbra{r}{s} \tensor \Phi\left(\ketbra{r}{s}\right),
\end{equation}
which is the result of applying the map $\Phi$ to the second half of $\lambda$ EPR pairs.
Note that for all $\theta \in \{0,1\}^\lambda$ we have that
\begin{equation}
	\frac{1}{2^\lambda}\sum_{r,s \in \{0,1\}^\lambda}
	\ketbra{r}{s} \tensor \Phi\left(\ketbra{r}{s}\right)
	=
	\frac{1}{2^\lambda}\sum_{r,s \in \{0,1\}^\lambda}
	\ketbra{r^\theta}{s^\theta} \tensor \Phi\left(\ketbra{r^\theta}{s^\theta}\right).
\end{equation}
We then have that
\begin{equation}
\begin{aligned}
	&
	\E_\theta
	\sum_{x \in \{0,1\}^\lambda}
	\Tr\left[
		\left(\ketbra{x^\theta} \tensor B_x^\theta \tensor C_x^\theta\right)
		\rho
	\right]
	\\
	&=
	\E_\theta
	\frac{1}{2^\lambda}
	\sum_{x,r,s \in \{0,1\}^\lambda}
	\Tr\left[
		\left(\ketbra{x^\theta} \tensor B_x^\theta \tensor C_x^\theta\right)
		\left(\ketbra{r^\theta}{s^\theta} \tensor \Phi\left(\ketbra{r^\theta}{s^\theta}\right)\right)
	\right]
	\\
	&=
	\E_\theta
	\frac{1}{2^\lambda}
	\sum_{x \in \{0,1\}^\lambda}
	\Tr\left[
		\left(B_x^\theta \tensor C_x^\theta\right)
		\Phi\left(\ketbra{x^\theta}\right)
	\right].
\end{aligned}
\end{equation}
Thus the bound given in \cref{theorem:TFKW13} is directly applicable.
\end{proof}

\begin{proof}[\cref{theorem:string-permute}]
Recall that for any fixed string $s \in \{0,1\}^n$, the map $x \mapsto x \xor s$ is a permutation which is its own inverse.
If we define the map $g : \{0,1\}^n \to \R$ by $x \mapsto f(x, x\xor s)$, we then have that
\begin{equation}
	\E_x f(x, x\xor s) = \E_x g(x) = \E_x g(x \xor s) = \E_x f(x \xor s, x)
\end{equation}
which concludes the proof.
\end{proof}

\begin{proof}[\cref{theorem:algebra}]
Proceed by induction over $n$ noting that
\begin{align}
	\left(a^n + \sum_{k = 0}^{n-1} a^{n-k-1}bc^k \right) c
	&=
	a^{n + 1} + \sum_{k = 0}^{(n+1)-1} a^{(n + 1)-k-1} b c^k.
\end{align}
\end{proof}

\begin{proof}[\cref{theorem:vector-norm}]
We first note that $\norm{v_0} \leq \norm{\sum_{i = 0}^n v_i} + \sum_{i = 1}^n \norm{v_i} \leq 1 + n$.
Then, using the triangle inequality, we have that
\begin{equation}
	\norm{\sum_{i = 0}^{n} v_i}^2
	\leq
	\left(\sum_{i = 0}^{n} \norm{v_i}\right)^2
	=
	\sum_{i = 0}^n \sum_{j = 0}^n \norm{v_i} \cdot \norm{v_j}.
\end{equation}
We consider the summands in the right hand side differently depending on the value of $i$.
If $i = 0$, we note that
\begin{equation}
	\sum_{j = 0}^n \norm{v_0} \cdot \norm{v_j}
	\leq
	\norm{v_0}^2 + (n + 1)\sum_{j = 1}^{n} \norm{v_j}.
\end{equation}
If $i \not= 1$, we note that
\begin{equation}
	\sum_{j = 0}^n \norm{v_i} \cdot \norm{v_j}
	\leq
	\norm{v_i} \sum_{j = 0}^n \norm{v_j}
	\leq
	\left((n + 1) + n\right) \norm{v_i}
	.
\end{equation}
We obtain the result by adding each of these bounds.
\end{proof}

\end{document}